\DeclareMathOperator*{\minimize}{minimize}
\newtheorem{remark}{Remark}
\newcolumntype{C}[1]{>{\centering\arraybackslash}p{#1}}
\newcolumntype{L}{>{\raggedright\arraybackslash}X}
\newrobustcmd{\B}{\bfseries}
\definecolor{LightBlue}{rgb}{0.75,0.936,1.00}
\definecolor{LightCyan}{rgb}{0.88,1,1}
\newtheorem{lemma}{Lemma}
\begin{document}
\bstctlcite{IEEEexample:BSTcontrol}
\title{{Rate-Matching Framework for RSMA-Enabled Multibeam LEO Satellite Communications}}

\author{Jaehyup Seong, Juha Park, Juhwan Lee, Jungwoo Lee,\\ Jung-Bin Kim,  Wonjae Shin, and H. Vincent Poor 
    \thanks{J. Seong, J. Park, and W. Shin are with 
    Korea University, Seoul 02841, South Korea 
    (email: {\texttt{\{jaehyup, juha, wjshin\}@korea.ac.kr}});
    J. Lee and J. Lee are with 
    Seoul National University, Seoul 08826, South Korea (email: {\texttt{\{sgsyk649, junglee\}@snu.ac.kr}});
    J.-B. Kim is with 
    ETRI, Daejeon 34129, South Korea (email: {\texttt{jbkim777@etri.re.kr}});
    H. V. Poor is with 
    Princeton University, Princeton, NJ 08544, USA (email: {\texttt{poor@princeton.edu}}).
    }    \thanks{A part of this work was presented in
part at the IEEE GLOBECOM Workshop, Kuala Lumpur, Malaysia, Dec. 2023 \cite{seong2023robust}.}}

\maketitle 
\begin{abstract}
With the goal of ubiquitous global connectivity, multibeam low Earth orbit (LEO) satellite communication (SATCOM) has attracted significant attention in recent years. 
The traffic demands of users are heterogeneous within the broad coverage of SATCOM due to different geological conditions and user distributions.
{Motivated by this, this paper proposes a novel rate-matching (RM) framework based on rate-splitting multiple access (RSMA) that minimizes the difference between the traffic demands and offered rates while simultaneously minimizing transmit power for power-hungry satellite payloads.} 
{Moreover, channel phase perturbations arising from channel estimation and feedback errors are considered to capture realistic multibeam LEO SATCOM scenarios.} 
To tackle the non-convexity of the RSMA-based RM problem under phase perturbations, we convert it into a tractable convex form via the successive convex approximation method and present an efficient algorithm to solve the RM problem. 
%
%
{Through the extensive numerical analysis across various traffic demand distribution and channel state information accuracy at LEO satellites, we demonstrate that RSMA flexibly allocates the power between common and private streams according to different traffic patterns across beams, thereby efficiently satisfying users' non-uniform traffic demands.}
{In particular, the use of common messages plays a vital role in overcoming the limited spatial dimension available at LEO satellites, enabling it to manage inter-/intra-beam interference effectively in the presence of phase perturbation.}
%
%
%
%

\end{abstract}
\begin{IEEEkeywords}
Multibeam LEO SATCOM, rate-matching, RSMA, heterogeneous traffic demand, phase perturbation.
\end{IEEEkeywords}

\IEEEpeerreviewmaketitle


\section{Introduction}
With the evolution of the principal source of traffic in wireless communications from mobile voice to mobile multimedia data due to the explosive development of high data rate wireless applications, the demand for massive connectivity and high throughput is continuously increasing \cite{zhong2018traffic}. 
Because of such requirements, broadband and multibeam satellite communications (SATCOM) have attracted considerable attention recently and are envisioned to play a vital role in 6G mobile communication systems.
The ubiquitous global connectivity of SATCOM enables it to serve not only urban areas but also rural and other remote areas, which cannot easily be served by conventional terrestrial base stations (BSs) \cite{perez2019signal}. 
In conventional multibeam SATCOM, the available bandwidth is divided into four sub-bands to reduce inter-beam interference efficiently, so-called four-color beam patterns \cite{perez2019signal}. 
Full frequency reuse is required to use the satellite spectrum efficiently while bringing about severe co-channel interference issues among the adjacent beams rather than four-color beam reuse.
To mitigate such inter- and intra-beam interference, accurate channel state information (CSI) needs to be available at both the transmitter (CSIT) and receiver (CSIR). 
However, obtaining accurate CSI in SATCOM is challenging because of the high end-to-end propagation delay, rapid mobility of satellites, and various other time-varying factors. 
{ For instance, low Earth orbit (LEO) satellites, orbiting at an altitude of 600 km with high velocities of 7.56 km/s, experience a significant round-trip delay (RTD) of 25.77 ms, substantially longer than on the few milliseconds typically observed in terrestrial networks \cite{10559954}.}
{ In typical satellite channel models, the magnitude of it is formulated with line-of-sight (LOS) and rain-attenuation components, determined by the path attenuation alone \cite{lin2021secrecy, lin2022refracting, lin2022slnr, gharanjik2015robust, zhang2019robust}.} Thus, the amplitude of the satellite channel is comparatively easy to estimate and varies slowly during the CSI feedback interval \cite{zhang2019robust}. 
{On the other hand, significant variations in the channel phase occur during the CSI feedback interval due to the high RTD, rapid satellite mobility, imperfect synchronization at the satellite, user movement, and other time-varying factors, resulting in severe phase perturbations \cite{vazquez2016precoding, gharanjik2015robust, zhang2019robust, wang2021resource}.}
Further, the phase perturbation can be caused even at the receiver side during the channel estimation process because of the deterioration of low-noise blocks and imperfect synchronization caused by the receiver's oscillator \cite{wang2021resource}. To provide reliable communication services across a broad coverage area of multibeam SATCOM, robust and advanced inter-/intra-beam interference management strategies are required.

Rate-splitting multiple access (RSMA) has recently emerged as a promising multiple access (MA) and interference management strategy that offers numerous benefits, including high spectral efficiency, robustness against imperfect CSI, network scalability, and flexibility \cite{clerckx2016rate, mao2018energy, mao2018rate, clerckx2019rate, park2023rate}. 
Such benefits of RSMA mainly come from splitting each user's message into common and private parts before being sent from the transmitter. Following the one-layer RSMA principle \cite{mao2018rate}, the common parts are combined and encoded into a single common stream for decoding by all users. Meanwhile, the private parts are individually encoded for decoding by corresponding users alone. The receivers first decode the common stream by considering the private streams as noise, then remove it from the received signal through the successive interference cancellation (SIC) technique. After SIC, the private streams are decoded, while the other private streams are treated as noise. It is worth pointing out that only a single SIC operation at the receiver is required in one-layer RSMA regardless of the scale of networks and user channel conditions.
One-layer RSMA offers comparable rate performance while simplifying both encoding and decoding complexities compared to the generalized RSMA that entails multiple common streams and necessitates multiple SIC operations at the receivers \cite{mao2018rate}. As such, by adjusting the portion between common and private parts, RSMA not only generalizes the existing MA techniques, including spatial division MA (SDMA), non-orthogonal MA (NOMA), and multicasting as special cases, but also provides spectral efficiency, robustness, scalability, and flexibility \cite{park2023rate}.
%

The utilization of RSMA in multibeam SATCOM is of increasing interest to effectively manage interference issues raised by imperfect CSI and provide reliable service to a wide range of networks with limited satellite radio resources \cite{yin2020rate, yin2020rate_J, yin2021ratephy, si2022rate, liu2023energy, khan2023rate, xu2023distributed, lin2021supporting, lin2020secure, yin2022rate, 10266774, kim2023distributed}.
{In both single- and multi-gateway SATCOM systems, RSMA was utilized to enhance max-min fairness (MMF) under imperfect CSIT \cite{yin2020rate, yin2020rate_J, yin2021ratephy, si2022rate}.}
{In \cite{liu2023energy}, an RSMA-based energy efficiency maximization problem was studied in multibeam SATCOM with imperfect CSIT.}
{To efficiently reuse the limited frequency band in coexistence networks of geostationary Earth orbit (GEO) and LEO satellites, an RMSA-based multi-layer SATCOM system was investigated to improve the sum-rate and MMF performance \cite{khan2023rate, xu2023distributed}.}
{RSMA was also employed in integrated satellite-aerial networks to maximize the sum-rate in \cite{lin2021supporting}.
Through \cite{lin2020secure, yin2022rate, 10266774, kim2023distributed}, it was proven that RSMA achieves better spectral and energy efficiencies compared to other MA techniques, such as SDMA and NOMA, in integrated satellite–terrestrial networks (ISTNs).}

%


{ 
As such, much of the existing works on RSMA in SATCOM \cite{yin2020rate, yin2020rate_J, yin2021ratephy, si2022rate, liu2023energy, khan2023rate, xu2023distributed, lin2021supporting, lin2020secure, yin2022rate, 10266774, kim2023distributed} focused on improving overall network quality by either maximizing the sum-rate or the minimum rate among users within the satellite coverage area. However, these approaches overlooked one of the key aspects of SATCOM: \textit{heterogeneous and time-varying nature of user traffic demands}.
It is important to emphasize that the traffic demands of users and beams within the satellite service area are potentially heterogeneous and time-varying, as satellites are capable of providing wide coverage across diverse regions, such as remote areas, urban centers, and suburban neighborhoods \cite{alberti2010system}. 
This heterogeneity in traffic demands can lead to inevitable mismatches between the demanded and offered rates in sum-rate maximization and MMF approaches \cite{yin2020rate, yin2020rate_J, yin2021ratephy, si2022rate, liu2023energy, khan2023rate, xu2023distributed, lin2021supporting, lin2020secure, yin2022rate, 10266774, kim2023distributed}. 

Without explicitly considering the asymmetry in traffic distribution, these mismatches can severely degrade system performance, either through under-provisioning the required rate (i.e., \textit{unmet rate}) or over-allocating resources beyond actual needs (i.e., \textit{unused rate}). These issues are particularly critical in the resource-constrained environment of SATCOM in which efficient utilization of power, frequency, and time resources is crucial. Satellites need to allocate their resources across various critical subsystems, including attitude/orbit control systems for maintaining satellite orientation, thermal regulation systems for managing temperature, and on-board payload operations such as remote sensing or Earth observation. Consequently, the resources available for communications are significantly restricted. To ensure the optimal use of the highly restricted resources for communications, the data rates offered to users must be carefully allocated according to their individual traffic demands. Therefore, designing a traffic-aware precoder is essential in SATCOM to dynamically adapt to the non-uniform traffic requirements of users and beams, while efficiently managing limited satellite radio resources.}

\subsection{Related Works}
{A number of studies considered the issue of traffic mismatches for SATCOM under heterogeneous traffic patterns among beams, including \cite{abdu2021flexible, ha2022geo, wang2020noma, lin2022multi, zheng2012generic, liu2019qos, 9769901, cui2023energy}.}
{The authors of \cite{abdu2021flexible, ha2022geo} discussed the transmission power minimization problem while guaranteeing the traffic demands with quality of service (QoS) constraints.}
{To be more specific, the authors of \cite{abdu2021flexible} focused on satisfying the QoS of each beam while jointly minimizing the carrier utilization and transmission power using SDMA.}
{In \cite{ha2022geo}, the authors investigated an optimal linear precoder and beam-hopping (BH) architecture to minimize the transmission power while satisfying the QoS of users.}
{ However, the power minimization problems with QoS constraints \cite{abdu2021flexible, ha2022geo} can result in infeasible solutions when the available power budget is insufficient or the channel condition is unfavorable for satisfying the traffic demands. This, in turn, poses fundamental challenges in reliably satisfying the traffic requirements of users in SATCOM, where power-hungry payloads are common.
Minimizing transmit power under QoS constraints risks destabilizing the system in SATCOM, which degrades the user experience. As such, SATCOM requires a more sophisticated approach that can directly address traffic mismatches and ensure reliable communication services, regardless of the available power budget or user channel conditions.} 

{To tackle this issue, the authors in \cite{wang2020noma} studied the flexible resource management problem based on NOMA to maximize the minimum value of the ratio of the offered capacity to the required traffic.}
{Similarly, in \cite{lin2022multi}, maximizing a minimum ratio of offered capacity to the traffic demand was investigated by considering an optimized BH architecture.} 
{In addition, the SDMA-based linear precoder design and NOMA-based beam power optimization problems were studied in \cite{zheng2012generic} and \cite{liu2019qos}, respectively, to minimize the difference between the traffic demands and offered rates.}
{The authors of \cite{9769901} 
focused on a joint optimization problem for resource allocation and beam scheduling to minimize the disparity between the traffic demands and offered rates based on NOMA.}
{Furthermore, the authors of \cite{cui2023energy} studied the joint {unmet rate} and transmission power minimization problem using minimum mean square error (MMSE)-based RSMA under perfect CSI conditions.} 

{\subsection{Motivations and Contributions}}

{ 
While substantial research efforts were dedicated to preventing the mismatches between required and offered rates directly \cite{wang2020noma, lin2022multi, zheng2012generic, liu2019qos, 9769901, cui2023energy}, previous studies face practical limitations that hinder their application in multibeam SATCOM with asymmetrically distributed demands for the following reasons.
\subsubsection*{\rm{1) Limited Spatial Dimension and Receiver Complexity}} In satellite networks, the number of users served by the satellite typically exceeds the number of transmit antennas, resulting in a user-overloaded scenario. However,
the previous works \cite{wang2020noma, lin2022multi, zheng2012generic, liu2019qos, 9769901} are based on SDMA or NOMA, both of which heavily depend on the network-load conditions.
In particular, as SDMA relies on the \textit{spatial degrees of freedom} provided by the multiple-input multiple-output technique, intra-/inter-beam interference cannot be sufficiently eliminated in user-overloaded scenarios, resulting in a high-level of residual interference. Moreover, the effectiveness of SDMA severely depends on the quality of CSIT, which is difficult to obtain perfectly at the satellite within a coherence time.
Also, NOMA requires increasingly complex receiver designs
due to SIC operations, as the number of users grows, and
without accurate CSIR, residual interference can cause an error propagation, reducing the effectiveness of interference control.
Thus, serving a large number of users with scarce degrees of
freedom in satellite radio resources has salient limitations.


\subsubsection*{\rm{2) Flexible RSMA Precoder Design According to Traffic Demands}}
In multibeam SATCOM, it is interesting to design a flexible and robust precoder according to heterogeneous traffic patterns among beams and users.
Although the authors of \cite{cui2023energy} utilized RSMA in overloaded scenarios and revealed its superiority over SDMA and NOMA for unmet rate reduction, the private precoding vectors are designed based on the conventional MMSE method that is not optimal for non-uniform traffic demands. There are still unsolved issues and room for improvement in designing a traffic-aware precoder to meet dynamic changes for multibeam SATCOM.


\subsubsection*{\rm{3) Phase Perturbation From Channel Estimation and Feedback Errors}}
The channel phase component in SATCOM varies rapidly because of the high RTD, movement of satellites and users, imperfect synchronization, and a variety of other time-varying factors. 
While the assumption that the magnitude of a channel is constant is acceptable to some extent since it is dominated by path attenuations, the rapidly time-varying phase component is not negligible \cite{vazquez2016precoding, gharanjik2015robust, zhang2019robust, wang2021resource}.
Therefore, the phase perturbations that can arise both in channel estimation and feedback procedures should be carefully considered. However, the effects of such phase perturbations were not considered in the literature \cite{wang2020noma, lin2022multi, zheng2012generic, liu2019qos, 9769901, cui2023energy}.

In summary, existing works \cite{wang2020noma, lin2022multi, zheng2012generic, liu2019qos, 9769901, cui2023energy} have not fully addressed the challenges in SATCOM with uneven traffic demands, including the scarcity of spatial degrees of freedom, the need for flexible precoder designs tailored to the traffic demands, and the impact of imperfect CSI, within a unified framework. While SDMA and NOMA can be effective under certain conditions, they struggle with interference management in overloaded environments and imperfect CSI, resulting in suboptimal performance. Moreover, the existing RSMA-based study relies on the conventional MMSE method, which is unsuitable for dynamically changing and asymmetric traffic demands across beams and users. Channel estimation and feedback errors, which were overlooked in previous studies, must also be carefully considered for practical SATCOM scenarios.

Motivated by this, we propose a novel rate-matching (RM) framework to comprehensively address these intertwined challenges in multibeam SATCOM. 
The proposed RM framework enhances system robustness through an adaptive traffic-aware RSMA precoder that effectively handles highly asymmetric traffic requirements while accounting for phase perturbations caused by channel estimation and feedback errors. The RSMA precoder is optimized to match the actual offered traffic to the non-uniform traffic demands, ensuring reliable communication services even under a limited power budget at the satellite. Moreover, our proposed optimization algorithm provides additional benefits in terms of transmit power efficiency.

In our previous study \cite{10304489}, we introduced a framework to prevent the unmet rates of multiple groups in multibeam GEO SATCOM by managing inter-beam interference using RSMA under perfect CSI. Continuing the same spirit in this work, we make improvements to the RSMA-based RM framework in addressing the non-uniform traffic demands of individual users in multibeam LEO SATCOM by mitigating both inter- and intra-beam interference under imperfect CSIT and CSIR caused by phase perturbations from channel estimation and feedback errors. Moreover, to further enhance resource utilization efficiency at the LEO satellite, we develop an optimization algorithm to simultaneously minimize transmit power and prevent both unmet/unused rates, efficiently utilizing the available power budget.
The main contributions are as follows:}

\begin{itemize}
\item We propose an RSMA-based RM framework that effectively satisfies individual user traffic demands by minimizing the difference between the demands and offered rates. {By matching the offered data rate to the data requirements with RSMA, the unused/unmet rates of users are effectively reduced in a multibeam LEO SATCOM environment with limited frequency and power resources. For more realistic LEO SATCOM scenarios, the channel phase perturbations caused by channel estimation and feedback errors (i.e., imperfect CSIT and CSIR) are also taken into account. To tackle the challenge of phase perturbations in precoder design for inter-/intra-beam interference management, we utilize statistical information on phase perturbations with relatively slower variations.}

\item To closely approximate the formulated non-convex problem into a tractable convex form, a successive convex approximation (SCA)-based algorithm is proposed.
In addition, we show that the proposed algorithm brings an additional benefit in terms of transmit power efficiency.
In other words, the proposed RM framework provides dual advantages, namely, the reduction of unused/unmet rates and the enhancement of transmit power efficiency.
\item 
{The performance of our proposed framework is validated based on key parameters from the 3rd Generation Partnership Project (3GPP) non-terrestrial networks (NTN) standards \cite{3gpp_ntn} to ensure realistic LEO SATCOM environments.} We show that the proposed RM framework outperforms the existing MMSE-based RSMA method \cite{cui2023energy} as well as the other MA techniques for {unused}/{unmet rate} reduction.
We numerically verify that the superiority of the proposed scheme stems from its key capabilities: i) flexible power allocation/precoding vector design according to traffic patterns among beams, ii) robustness against channel phase perturbations, and iii) network scalability. {This confirms RSMA as a potent MA technique to satisfy heterogeneous traffic demands in LEO SATCOM.}
\end{itemize}

\subsection{Notations}    
In the remaining sections of this paper, standard, lowercase boldface, and uppercase boldface letters indicate scalars, vectors, and matrices, respectively. 
The notation $\mathbf{X} \succcurlyeq 0$ indicates that matrix $\mathbf{X}$ is a positive semi-definite (PSD) matrix. 
Notations $(\cdot)^{\sf{T}}$, $(\cdot)^{\sf{H}}$, $\odot$, $\mathbb{E}[\cdot]$, $\Vert \cdot \Vert_{1}$, $\Vert \cdot \Vert_{2}$, and $\Vert \cdot \Vert_{\sf{F}}$ identify the transpose, hermitian, Hadamard product, expectation, L1-norm, L2-norm, and Frobenius norm, respectively. 
$\mathbf{1}$ denoted a vector of all 1's. 
Furthermore, $\mathbf{I}_{N}$ and $\mathbf{1}_{N}$ denote $N \times N$ identity matrix and matrix of all 1's, respectively. 

\section{System Model and Problem Formulation}

{ We consider a multibeam LEO SATCOM system, wherein a LEO satellite equipped with $N_{\sf{t}}$ antenna feeds (indexed by $\mathcal{N} \triangleq \{ 1,\cdots, N_{\sf{t}} \}$) serves $K$ users within the coverage area (indexed by $\mathcal{K} \triangleq \{ 1,\cdots, K \}$) equipped with a single antenna.}
{ We assume a single-feed-per-beam architecture \cite{perez2019signal} in which each antenna feed generates a distinct spot beam by employing a reflector antenna. By strategically positioning the feeds along the reflector antenna, which directs the signal from each feed, the satellite forms well-defined multiple spot beams aimed at specific coverage areas.}
{Herein, the LEO satellite illuminates various regions, and the users in each region have uneven traffic demands, as illustrated in Fig.~\ref{Fig1}.} 
{It is assumed that both the LEO satellite and users have perfect channel amplitude information but obtain imperfect CSI due to the channel phase perturbations caused by high RTD, rapid movement of LEO satellites, and other time-varying factors \cite{vazquez2016precoding, gharanjik2015robust, zhang2019robust, wang2021resource}.}
{Specifically, the LEO satellite has imperfect knowledge of CSI because of the phase perturbations caused by both channel estimation and feedback errors. In contrast, the users have imperfect knowledge of CSI because of the phase perturbation caused by the channel estimation error alone. Meanwhile, the LEO satellite is assumed to know exactly each user’s traffic demand.}

\subsection{Channel Model}
{In our system model, a channel between the LEO satellite and all users is characterized by matrix $\mathbf{H} = [\mathbf{h}_1, \cdots, \mathbf{h}_K] \in \mathbb{C}^{N_{\sf{t}} \times K}$, where vector $\mathbf{h}_k\in \mathbb{C}^{{N_{\sf{t}}}\times{1}}$ denotes the channel between the LEO satellite and the $k$-th user.}
The vector $\mathbf{h}_k$ is composed of the user terminal antenna gain, satellite beam radiation pattern, free-space loss, rain attenuation gain, and signal phase of the channel. 
Therefore, the $n_{\sf{t}}$-th element of $\mathbf{h}_k$, that is, the channel between the $n_{\sf{t}}$-th feed and $k$-th user, can be represented as follows \cite{yin2020rate_J, 7091022}:
\begin{align}\label{channel_Eq1}
h_{k,n_{\sf{t}}} = \frac{\sqrt{G_{{n_{\sf{t}}}, k}{G_{\sf R}}}}
{4\pi \frac{d_k}{\lambda} \sqrt{\kappa {T_{\sf sys}} {B}}} {\chi}_{n_{\sf{t}}, k}^{-1/2}e^{-j\phi_{n_{\sf{t}},k}}, 
\end{align}
where $G_{\sf R}$, $d_k$, $\lambda$, $\kappa$, $T_{\sf sys}$, and $B$ denote the user terminal antenna gain, distance between the satellite and $k$-th user, carrier wavelength, Boltzmann constant, receiving system noise temperature, and bandwidth, respectively.
$G_{{n_{\sf{t}}},k}$ is the beam gain from the $n_{\sf{t}}$-th feed to the $k$-th user, expressed using the $n$-th order first-kind Bessel function $J_{n}(\cdot)$ as  
\begin{align}\label{channelEq_2}
G_{{n_{\sf{t}}},k}=G_{\sf{max}}\bigg[\frac{J_{1}(\mu_{n_{\sf{t}},k})}{2\mu_{n_{\sf{t}},k}} + 36 \frac{J_{3}(\mu_{n_{\sf{t}},k})}{\mu_{n_{\sf{t}},k}^{3}}\bigg]^{2}, 
\end{align}
where $G_{\sf max}$ is the maximum beam gain observed at the beam center and $\mu_{n_{\sf{t}},k} = 2.07123 \, \mathrm{sin}(\theta_{{n_{\sf{t}}},{k}})/\mathrm{sin}(\theta_{3 \, \mathrm{dB}})$. 
In the equation $\mu_{n_{\sf{t}},k}$, the angles $\theta_{{n_{\sf{t}}},{k}}$ and $\theta_{3 \, \mathrm{dB}}$ denote the angle between the beam center and $k$-th user with respect to the $n_{\sf{t}}$-th feed and the 3 dB loss angle, respectively.
${\chi}_{n_{\sf{t}}, k}$ characterizes the effect of rain attenuation between the $n_{\sf{t}}$-th feed and $k$-th user the dB form of which ${\chi}_{n_{\sf{t}}, k}^{\mathrm{dB}}=10\log_{10}({\chi}_{n_{\sf{t}}, k})$ follows an independent log-normal distribution with mean $\mu$ and variance $\sigma^{2}$. $\phi_{n_{\sf{t}},k}$ is the signal phase component of the channel between the $n_{\sf{t}}$-th feed and $k$-th user that follows an independent uniform distribution from $0$ to $2\pi$.

\begin{figure}[!t]
\centering
 		\includegraphics[width=1\linewidth]{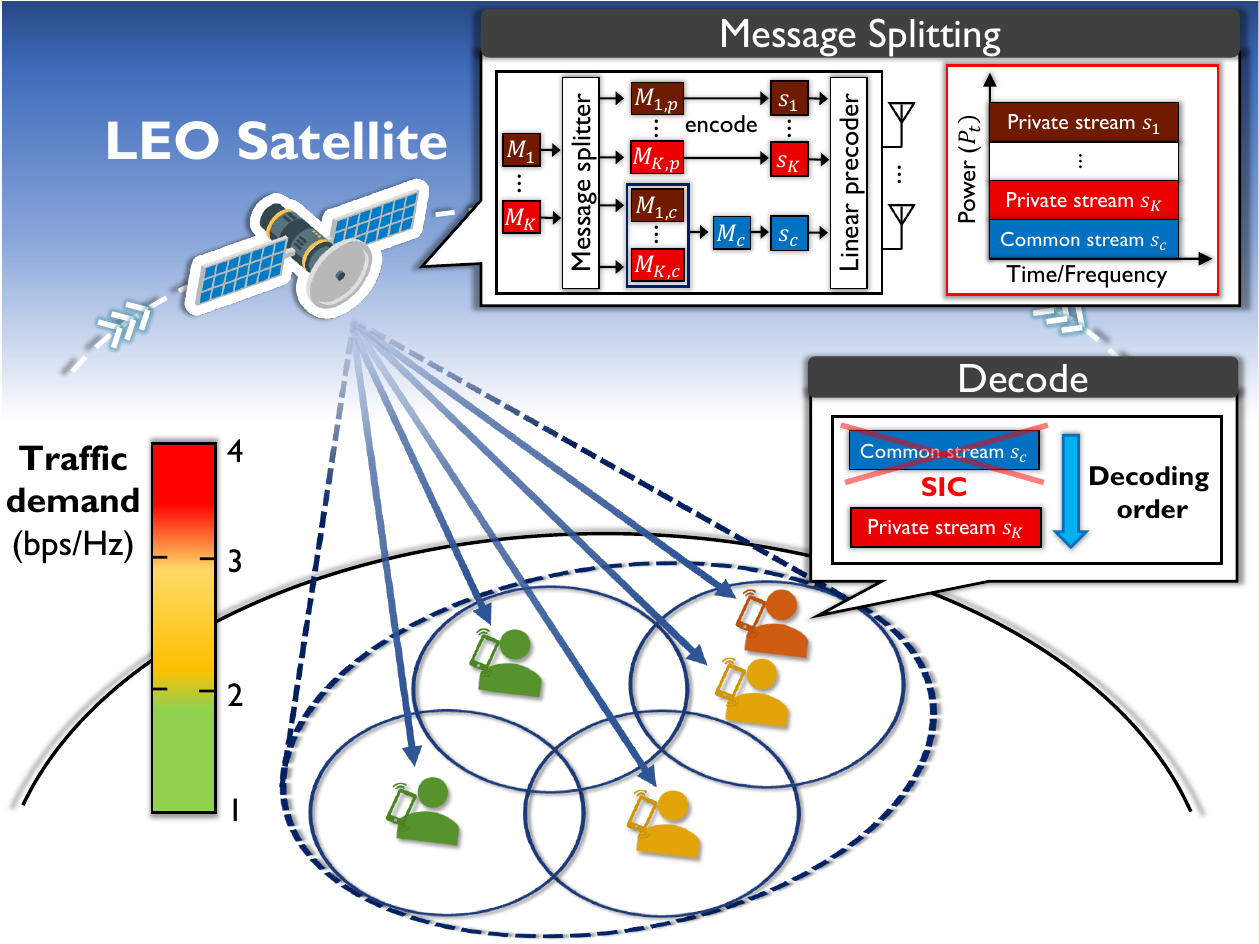}
 		\caption{{System model of the proposed RSMA scheme, where the LEO satellite serves users requiring different traffics in a wide range of service coverage.}}
    	\label{Fig1}
\end{figure}

\subsection{Imperfect Channel Model of the Satellite and Users}

As we assume that both satellite and users have imperfect channel phase information owing to the phase perturbations caused by the channel estimation and feedback processes, the satellite-to-$k$-th user channel vector $\mathbf{h}_k$ can be represented as
\begin{align}\label{channel_Eq3}
\mathbf{h}_k = \mathbf{\hat{h}}_k \odot {\mathbf{e}_{k}^{\sf{fb}}} \odot {\mathbf{e}_{k}^{\sf{ce}}} \in \mathbb{C}^{{N_{\sf{t}}}\times{1}}, 
\end{align}
{where $\mathbf{\hat{h}}_k\in \mathbb{C}^{{N_{\sf{t}}}\times{1}}$ and $\odot$ denote the partial CSI, which is known at the LEO satellite, and the Hadamard product, respectively.} In the equation (\ref{channel_Eq3}), ${\mathbf{e}_{k}^{\sf{fb}}} = [e^{j\theta_{k,1}^{\sf{fb}}}, \cdots, e^{j\theta_{k, N_{{\sf{t}}}}^{\sf{fb}}}]^{\sf{T}}\in \mathbb{C}^{{N_{\sf{t}}}\times{1}}$ and ${\mathbf{e}_{k}^{\sf{ce}}} = [e^{j\theta_{k,1}^{\sf{ce}}}, \cdots, e^{j\theta_{k, N_{{\sf{t}}}}^{\sf{ce}}}]^{\sf{T}}\in \mathbb{C}^{{N_{\sf{t}}}\times{1}}$, $\forall k \in \mathcal{K}$ 
denote the phase perturbations due to channel feedback and estimation errors, respectively. 
Since the satellite suffers from both channel estimation and feedback errors, the partial CSI known at the satellite is represented as 
$\hat{\mathbf{H}} = [\hat{\mathbf{h}}_1, \cdots, \hat{\mathbf{h}}_K]\in \mathbb{C}^{{N_{\sf{t}}}\times{K}}$. 
On the other hand, since the users suffer from the channel estimation error alone, the partial CSI known at the $k$-th user is represented as 
$\hat{\mathbf{h}}_k \odot \mathbf{e}_{k}^{\sf{fb}} \in \mathbb{C}^{{N_{\sf{t}}}\times{1}}$, $\forall k \in \mathcal{K}$. 
Herein, the phase perturbation vectors $\mathbf{\theta}_{k}^{\sf{fb}} = [\theta_{k,1}^{\sf{fb}}, \cdots, \theta_{k, N_{{\sf{t}}}}^{\sf{fb}}]^{\sf{T}}$ and $\mathbf{\theta}_{k}^{\sf{ce}} = [\theta_{k,1}^{\sf{ce}}, \cdots, \theta_{k, N_{{\sf{t}}}}^{\sf{ce}}]^{\sf{T}}$ are each assumed to be independent and identically distributed (i.i.d.) such that $\mathbf{\theta}_{k}^{\sf{fb}}\sim\mathcal{N}{(0,\delta_{\sf{fb}}^{2}\mathbf{I}_{N_{{\sf{t}}}})}$ and $\mathbf{\theta}_{k}^{\sf{ce}}\sim\mathcal{N}{(0,\delta_{\sf{ce}}^{2}\mathbf{I}_{N_{{\sf{t}}}})}$, $\forall k \in \mathcal{K}$, respectively. 
Under such imperfect CSIT and CSIR scenarios, we presume that statistical information on channel phase perturbations, which vary relatively slower, is available to be captured at the satellite.


\subsection{Rate-Splitting Multiple Access Based Signal Model}
Employing the rate-splitting strategy, the satellite first splits messages $M_{1}, \cdots, M_{K}$, intended for each user, into common messages and private messages, that is, $M_k \rightarrow \{M_{k,{\sf{c}}}, \,\, M_{k,{\sf{p}}} \}$, $\forall k \in \mathcal{K}$. 
Subsequently, the satellite combines all user's common messages $M_{1,{\sf{c}}}, \cdots, M_{K,{\sf{c}}}$ into one common message $M_{\sf{c}}$ and encodes it onto a common stream $s_{\sf{c}}$ using a codebook, which is known by all users. 
On the other hand, the private messages are encoded onto private streams $s_1, \cdots, s_K$, intended for each user, using codebooks known to the corresponding users only. 
Subsequently, using the precoding matrix $\mathbf{P}=[\mathbf{p}_{\sf{c}}, \mathbf{p}_1,\cdots,\mathbf{p}_K]\in \mathbb{C}^{{N_{\sf{t}}}\times{(K+1)}}$, consisting of the common precoding vector $\mathbf{p}_{\sf{c}}\in \mathbb{C}^{{N_{\sf{t}}}\times{1}}$ and private precoding vectors $\mathbf{p}_{k} \in \mathbb{C}^{{N_{{\sf{t}}}}\times{1}}$, $\forall k \in \mathcal{K}$, a stream vector $\mathbf{s}=[s_{\sf{c}}, s_1, \cdots, s_K]^{\sf{T}}\in \mathbb{C}^{{(K+1)}\times{1}}$ is linearly combined as 
\begin{align}
\label{TS_Eq}
\mathbf{x} = \mathbf{P}\mathbf{s} = \mathbf{p}_{\sf{c}} s_{\sf{c}} + \sum_{j=1}^{K}{\mathbf{p}_j s_j}\in \mathbb{C}^{{N_{\sf{t}}}\times{1}}, 
\end{align}
and then $\mathbf{x}$ is transmitted to the $k$-th user through $\mathbf{h}_k$. 
Thus, a signal received at the $k$-th user can be represented as follows:
\begin{align}
    \label{RS_Eq} 
    y_{k} & = \mathbf{h}_{k}^{\sf{H}}\mathbf{x} + n_{k}, \,\, \forall k \in \mathcal{K},
\end{align}
where $n_k$ is the additive white Gaussian noise (AWGN) that follows i.i.d. such that $n_k\sim\mathcal{CN}{(0,\sigma_{n}^{2})}$, $\forall k \in \mathcal{K}$.
At each user-$k$, the common stream $s_c$ is first decoded considering the private streams as noise and is removed from $y_k$ using SIC. 
Then, by treating the other private streams as noise, the dedicated private stream $s_k$ is decoded.

{Due to the phase perturbation by the estimation error, the received signal at the $k$-th user (\ref{RS_Eq}) is rewritten as
\begin{align}
    \label{RSC_Eq} 
    y_{k} & \, = \mathbf{h}_{k}^{\sf{H}}\mathbf{x} + n_{k} \nonumber \\  
    & \,= (\hat{\mathbf{h}}_{k} \odot {\mathbf{e}_{k}^{\sf{fb}}} \odot {\mathbf{e}_{k}^{\sf{ce}}})^{\sf{H}} \mathbf{x} + n_{k} \nonumber \\
    & \,= ((\hat{\mathbf{h}}_{k} \odot {\mathbf{e}_{k}^{\sf{fb}}}) + (\hat{\mathbf{h}}_{k} \odot {\mathbf{e}_{k}^{\sf{fb}}} \odot ({\mathbf{e}_{k}^{\sf{ce}}} - \mathbf{1})))^{\sf{H}} \mathbf{x} + n_{k} \nonumber \\
    & \overset{(a)}{=} \underbrace{(\hat{\mathbf{h}}_{k} \odot {\mathbf{e}_{k}^{\sf{fb}}})^{\sf{H}} \mathbf{p}_{c} s_{c}}_{\rm{desired \,\, common \,\, signal}}
    + \underbrace{(\hat{\mathbf{h}}_{k} \odot {\mathbf{e}_{k}^{\sf{fb}}} \odot ({\mathbf{e}_{k}^{\sf{ce}}} - \mathbf{1}))^{\sf{H}} \mathbf{p}_{\sf{c}} s_{\sf{c}}}_{\rm{self-interference \,\, signal}} \nonumber \\
    & \, + \underbrace{\sum_{j=1}^{K} (\hat{\mathbf{h}}_{k} \odot {\mathbf{e}_{k}^{\sf{fb}}} \odot {\mathbf{e}_{k}^{\sf{ce}}})^{\sf{H}} \mathbf{p}_{j} s_{j}}_{\rm{interference \,\, signals}} + \underbrace{n_{k}}_{\rm{noise}},
\end{align}
where step $(a)$ follows from (\ref{TS_Eq}), and the channel estimation error at the receiver induces the residual self-interference signal after SIC referred to as imperfect SIC.} 

To design a robust precoder in the presence of imperfect CSIT and CSIR, we first employ a concept of the generalized mutual information \cite{an2021rate,lee2022max, 9894281}. By treating the residual self-interference signal as an independent noise, we express the common signal-to-interference-plus-noise ratio (SINR) for the worst case under channel estimation error and represent it as $\gamma_{{\sf{c}},k}$.
Based on $\gamma_{{\sf{c}},k}$, we express the common rate in an ergodic form with respect to the feedback error as (\ref{RCEq1}) at the top of the next page in which $\mathcal{L} \triangleq \{{\sf c}, 1, \cdots, K\}$.
\begin{figure*}[!t]
\begin{align}
    \label{RCEq1}
    \small
    R_{{\sf{c}},k} &= \mathbb{E}_{\mathbf{\theta}_k^{\sf{fb}}} [\log_{2}{(1+\gamma_{{\sf{c}},k})}\vert\hat{\mathbf{h}}_k] \nonumber \\
    & = \mathbb{E}_{\mathbf{\theta}_k^{\sf{fb}}} \bigg[\log_{2} 
    \bigg(1 + \frac{\vert(\hat{\mathbf{h}}_{k} \odot \mathbf{e}_{k}^{\sf{fb}})^{\sf{H}} \mathbf{p}_{\sf{c}}\vert^{2}}{\sum\limits_{j \in \mathcal{L}}\mathbb{E}_{\mathbf{\theta}_k^{\sf{ce}}} [ \vert(\hat{\mathbf{h}}_{k} \odot \mathbf{e}_{k}^{\sf{fb}} \odot (\mathbf{e}_{k}^{\sf{ce}} - \mathbf{1}))^{\sf{H}} \mathbf{p}_{j}\vert^{2} ] + \sum\limits_{j=1}^{K} \vert (\hat{\mathbf{h}}_{k} \odot \mathbf{e}_{k}^{\sf{fb}})^{\sf{H}} \mathbf{p}_{j}\vert^{2} + \sigma_{n}^{2}}\bigg) \bigg] \nonumber \\
    & \, {\approx} \log_{2} 
    \bigg(1 + \frac{\mathbb{E}_{\mathbf{\theta}_k^{\sf{fb}}} [ \vert(\hat{\mathbf{h}}_{k} \odot \mathbf{e}_{k}^{\sf{fb}})^{\sf{H}} \mathbf{p}_{\sf{c}}\vert^{2} ] }{\sum\limits_{j \in \mathcal{L}}\mathbb{E}_{\mathbf{\theta}_k^{\sf{fb}}, \mathbf{\theta}_k^{\sf{ce}}} [ \vert(\hat{\mathbf{h}}_{k} \odot \mathbf{e}_{k}^{\sf{fb}} \odot (\mathbf{e}_{k}^{\sf{ce}} - \mathbf{1}))^{\sf{H}} \mathbf{p}_{j}\vert^{2} ] + \sum\limits_{j=1}^{K} \mathbb{E}_{\mathbf{\theta}_k^{\sf{fb}}} [\vert(\hat{\mathbf{h}}_{k} \odot \mathbf{e}_{k}^{\sf{fb}})^{\sf{H}} \mathbf{p}_{j}\vert^{2} ] + \sigma_{n}^{2}}\bigg) \nonumber \\
    &= \log_{2}
    \bigg(1+\frac{\mathbf{p}_{\sf{c}}^{\sf{H}}(\hat{\mathbf{h}}_{k} \hat{\mathbf{h}}_{k}^{\sf{H}} \odot
    [e^{-\delta_{\sf{fb}}^{2}}\mathbf{1}_{N_{{\sf{t}}}} + (1-e^{-\delta_{\sf{fb}}^{2}})\mathbf{I}_{N_{{\sf{t}}}}])\mathbf{p}_{\sf{c}}}
    {l_{{\sf{c}},k} + \sigma_{n}^{2}} \bigg).
\end{align}
\noindent\rule{\textwidth}{.05pt}
\end{figure*}
{ However, handling this ergodic form is challenging because there is typically no general closed-form expression.
To tackle this issue, we approximate it with the method described in \cite{6816003}, i.e., $\mathbb{E}[\log_2(1+X/Y)]\approx \log_2(1+\mathbb{E}[X]/\mathbb{E}[Y])$ in which $X$ and $Y$ are non-negative variables.} 
Thereafter, with the statistical information of ${\mathbf{e}_{k}^{\sf{fb}}}$ and ${\mathbf{e}_{k}^{\sf{ce}}}$, each term of the approximated equation in (\ref{RCEq1}) is reconstructed via the following procedure.
First, $\mathbb{E}_{\mathbf{\theta}_k^{\sf{fb}}}[\vert(\hat{\mathbf{h}}_{k} \odot {\mathbf{e}_{k}^{\sf{fb}}})^{\sf{H}} \mathbf{p}_{\sf{c}} \vert^{2}]$ is rewritten as 
\begin{align}
\label{Derv_1} 
& \mathbb{E}_{\mathbf{\theta}_k^{\sf{fb}}}[\vert(\hat{\mathbf{h}}_{k} \odot {\mathbf{e}_{k}^{\sf{fb}}})^{\sf{H}} \mathbf{p}_{\sf{c}} \vert^{2}] 
= \mathbf{p}_{\sf{c}}^{\sf{H}}(\hat{\mathbf{h}}_{k}\hat{\mathbf{h}}_{k}^{\sf{H}}\odot\mathbb{E}_{\mathbf{\theta}_k^{\sf{fb}}}[{\mathbf{e}_{k}^{\sf{fb}}} {\mathbf{e}_{k}^{\sf{fb}}}^{\sf{H}}])\mathbf{p}_{\sf{c}} \nonumber \\ 
& = \mathbf{p}_{\sf{c}}^{\sf{H}}(\hat{\mathbf{h}}_{k}\hat{\mathbf{h}}_{k}^{\sf{H}}\odot[e^{-\delta_{\sf{fb}}^{2}}\mathbf{1}_{N_{{\sf{t}}}} + (1-e^{-\delta_{\sf{fb}}^{2}})\mathbf{I}_{N_{{\sf{t}}}}])\mathbf{p}_{\sf{c}}.
\end{align}
Second, $\sum_{j=1}^{K}\mathbb{E}_{\mathbf{\theta}_k^{\sf{fb}}}[\vert (\mathbf{\hat{h}}_{k}^{\sf{H}}\odot{\mathbf{e}_{k}^{\sf{fb}}})^{\sf{H}} \mathbf{p}_{j} \vert^{2}]$ is rewritten as
\begin{align}
\label{Derv_2} 
&\sum_{j=1}^{K}\mathbb{E}_{\mathbf{\theta}_k^{\sf{fb}}}[\vert (\mathbf{\hat{h}}_{k}^{\sf{H}}\odot{\mathbf{e}_{k}^{\sf{fb}}})^{\sf{H}} \mathbf{p}_{j} \vert^{2}]  
= \sum_{j=1}^{K}\mathbf{p}_{j}^{\sf{H}}(\hat{\mathbf{h}}_{k}\hat{\mathbf{h}}_{k}^{\sf{H}}\odot\mathbb{E}_{\mathbf{\theta}_k^{\sf{fb}}}[{\mathbf{e}_{k}^{\sf{fb}}} {\mathbf{e}_{k}^{\sf{fb}}}^{\sf{H}}])\mathbf{p}_{j}  \nonumber \\
& = \sum_{j=1}^{K} \mathbf{p}_{j}^{\sf{H}}(\hat{\mathbf{h}}_{k}\hat{\mathbf{h}}_{k}^{\sf{H}}\odot[e^{-\delta_{\sf{fb}}^{2}}\mathbf{1}_{N_{{\sf{t}}}} + (1-e^{-\delta_{\sf{fb}}^{2}})\mathbf{I}_{N_{{\sf{t}}}}])\mathbf{p}_{j}.
\end{align}
Third, owing to the independence between $\mathbf{\theta}_{k}^{\sf{fb}}$ and $\mathbf{\theta}_{k}^{\sf{ce}}$, $\sum_{j \in \mathcal{L}}\mathbb{E}_{\mathbf{\theta}_k^{\sf{fb}}, \mathbf{\theta}_k^{\sf{ce}}}[\vert(\hat{\mathbf{h}}_{k} \odot {\mathbf{e}_{k}^{\sf{fb}}} \odot ({\mathbf{e}_{k}^{\sf{ce}}} - \mathbf{1}))^{\sf{H}} \mathbf{p}_{j}\vert^{2}]$ is rewritten as
\begin{align}
\label{Derv_3} 
&\sum_{j \in \mathcal{L}}\mathbb{E}_{\mathbf{\theta}_k^{\sf{fb}}, \mathbf{\theta}_k^{\sf{ce}}}[\vert(\hat{\mathbf{h}}_{k} \odot {\mathbf{e}_{k}^{\sf{fb}}} \odot ({\mathbf{e}_{k}^{\sf{ce}}} - \mathbf{1}))^{\sf{H}} \mathbf{p}_{j}\vert^{2}] = 
\nonumber \\
& \sum_{j \in \mathcal{L}} \mathbf{p}_{j}^{\sf{H}}(\hat{\mathbf{h}}_{k}\hat{\mathbf{h}}_{k}^{\sf{H}} \odot\mathbb{E}_{\mathbf{\theta}_k^{\sf{fb}}}[{\mathbf{e}_{k}^{\sf{fb}}} {\mathbf{e}_{k}^{\sf{fb}}}^{\sf{H}}] \odot\mathbb{E}_{\mathbf{\theta}_k^{\sf{ce}}}[({\mathbf{e}_{k}^{\sf{ce}}}-\mathbf{1}) ({\mathbf{e}_{k}^{\sf{ce}}}-\mathbf{1})^{\sf{H}}])\mathbf{p}_{j} 
\nonumber \\
& = \sum_{j\in \mathcal{L}}\mathbf{p}_{j}^{\sf{H}}(\hat{\mathbf{h}}_{k}\hat{\mathbf{h}}_{k}^{\sf{H}} \odot[e^{-\delta_{\sf{fb}}^{2}}\mathbf{1}_{N_{{\sf{t}}}} + (1-e^{-\delta_{\sf{fb}}^{2}})\mathbf{I}_{N_{{\sf{t}}}}] 
\nonumber \\
& \odot[(1-e^{-\frac{\delta_{\sf{ce}}^{2}}{2}})^{2}\mathbf{1}_{N_{{\sf{t}}}} + (1-e^{-\delta_{\sf{ce}}^{2}})\mathbf{I}_{N_{{\sf{t}}}}])\mathbf{p}_{j}.
\end{align}
In the abovementioned equations (\ref{Derv_1})-(\ref{Derv_3}), $\mathbb{E}_{\mathbf{\theta}_{k}^{\sf{fb}}}[{\mathbf{e}_{k}^{\sf{fb}}} {\mathbf{e}_{k}^{\sf{fb}}}^{\sf{H}}]$ and $\mathbb{E}_{\mathbf{\theta}_{k}^{\sf{ce}}}[({\mathbf{e}_{k}^{\sf{ce}}}-\mathbf{1}) ({\mathbf{e}_{k}^{\sf{ce}}}-\mathbf{1})^{\sf{H}}]$ can be represented as follows:
\begin{align}
\label{Derv_4}
&\mathbb{E}_{\mathbf{\theta}_{k}^{\sf{fb}}}[{\mathbf{e}_{k}^{\sf{fb}}} {\mathbf{e}_{k}^{\sf{fb}}}^{\sf{H}}]= e^{-\delta_{\sf{fb}}^{2}} \mathbf{1}_{N_{{\sf{t}}}} + (1-e^{-\delta_{\sf{fb}}^{2}})\mathbf{I}_{N_{{\sf{t}}}} \succcurlyeq 0, \\
\label{Derv_5}
&\mathbb{E}_{\mathbf{\theta}_{k}^{\sf{ce}}}[({\mathbf{e}_{k}^{\sf{ce}}}-\mathbf{1}) ({\mathbf{e}_{k}^{\sf{ce}}}-\mathbf{1})^{\sf{H}}] = 
\nonumber \\
&(1-e^{-\frac{\delta_{\sf{ce}}^{2}}{2}})^{2}\mathbf{1}_{N_{{\sf{t}}}} + (1-e^{-\delta_{\sf{ce}}^{2}})\mathbf{I}_{N_{{\sf{t}}}} \succcurlyeq 0.
\end{align}
{The proof of (\ref{Derv_4}) and (\ref{Derv_5}) is presented in Appendix A.}
By substituting (\ref{Derv_1})-(\ref{Derv_3}) into the approximated equation of (\ref{RCEq1}), $R_{{\sf{c}},k}$ can be represented as the final equation of (\ref{RCEq1}) in which $l_{{\sf{c}},k}$ is given by
\begin{align}\label{lCEq}
& l_{{\sf{c}},k} = \sum\limits_{j \in \mathcal{L}} \mathbf{p}_{j}^{\sf{H}}(\hat{\mathbf{h}}_{k} \hat{\mathbf{h}}_{k}^{\sf{H}}\odot[e^{-\delta_{\sf{fb}}^{2}}\mathbf{1}_{N_{{\sf{t}}}} + (1-e^{-\delta_{\sf{fb}}^{2}})\mathbf{I}_{N_{{\sf{t}}}}] 
\nonumber \\ 
& \odot [(1-e^{-\frac{\delta_{\sf{ce}}^{2}}{2}})^{2}\mathbf{1}_{N_{{\sf{t}}}} + (1-e^{-\delta_{\sf{ce}}^{2}})\mathbf{I}_{N_{{\sf{t}}}}]) \mathbf{p}_{j} 
\nonumber \\
& + \sum_{j=1}^{K} \mathbf{p}_{j}^{\sf{H}}(\hat{\mathbf{h}}_{k} \hat{\mathbf{h}}_{k}^{\sf{H}} \odot [e^{-\delta_{\sf{fb}}^{2}} \mathbf{1}_{N_{{\sf{t}}}} + (1-e^{-\delta_{\sf{fb}}^{2}})\mathbf{I}_{N_{{\sf{t}}}}])\mathbf{p}_{j}.
\end{align}

{To characterize the ergodic rate for the private stream, we decompose the $k$-th user signal
after the SIC procedure as
\begin{align}
    \label{RSP_Eq} y_{k,\sf{SIC}} & \, = y_k - \underbrace{(\hat{\mathbf{h}}_{k} \odot {\mathbf{e}_{k}^{\sf{fb}}})^{\sf{H}} \mathbf{p}_{\sf{c}} s_{\sf{c}}}_{\rm{desired \,\, common \,\, signal}} \nonumber \\
    & \overset{(b)}{=} \underbrace{(\hat{\mathbf{h}}_{k} \odot {\mathbf{e}_{k}^{\sf{fb}}})^{\sf{H}} \mathbf{p}_{k} s_{k}}_{\rm{desired \,\, private \,\, signal}} + \underbrace{(\hat{\mathbf{h}}_{k} \odot {\mathbf{e}_{k}^{\sf{fb}}} \odot ({\mathbf{e}_{k}^{\sf{ce}}} - \mathbf{1}))^{\sf{H}} \mathbf{p}_{k} s_{k}}_{\rm{self-interference\,\,signal}} \nonumber \\
    & \, + \underbrace{(\hat{\mathbf{h}}_{k} \odot {\mathbf{e}_{k}^{\sf{fb}}} \odot ({\mathbf{e}_{k}^{\sf{ce}}} - \mathbf{1}))^{\sf{H}} \mathbf{p}_{\sf{c}} s_{\sf{c}}}_{\rm{SIC \,\, error}} \nonumber \\
    & \, + \underbrace{\sum_{j=1 , j \neq k}^{K} (\hat{\mathbf{h}}_{k} \odot {\mathbf{e}_{k}^{\sf{fb}}} \odot {\mathbf{e}_{k}^{\sf{ce}}})^{\sf{H}} \mathbf{p}_{j} s_{j}}_{\rm{interference \,\, signals}} + \underbrace{n_{k}}_{\rm{noise}},
\end{align}
where step $(b)$ follows from (\ref{RSC_Eq}).}
Herein, both the self-interference signal and SIC error are caused by the channel estimation error at the receiver.
By using a similar technique to that used in the case of the common message, we express the private SINR for the worst case under channel estimation error as $\gamma_{{\sf{p}},k}$.
Then, by utilizing $\gamma_{{\sf{p}},k}$, we express the private rate in an ergodic form with respect to the feedback error as (\ref{RPEq1}) at the top of this page and approximate it based on the method described in \cite{6816003}.
\begin{figure*}[!t]
\begin{align}
    \label{RPEq1}
    \small
    R_{{\sf{p}},k} &= \mathbb{E}_{\mathbf{\theta}_k^{\sf{fb}}} [\log_{2}{(1+\gamma_{{\sf{p}},k})}\vert\hat{\mathbf{h}}_k] \nonumber \\
    &=\mathbb{E}_{\mathbf{\theta}_k^{\sf{fb}}} \bigg[\log_{2} 
    \bigg(1 + \frac{\vert(\hat{\mathbf{h}}_{k} \odot \mathbf{e}_{k}^{\sf{fb}})^{\sf{H}} \mathbf{p}_{k}\vert^{2}}{\sum\limits_{j \in \mathcal{L}}\mathbb{E}_{\mathbf{\theta}_k^{\sf{ce}}} [ \vert(\hat{\mathbf{h}}_{k} \odot \mathbf{e}_{k}^{\sf{fb}} \odot (\mathbf{e}_{k}^{\sf{ce}} - \mathbf{1}))^{\sf{H}} \mathbf{p}_{j}\vert^{2} ] + \sum\limits_{j=1 , j\neq k}^{K} \vert (\hat{\mathbf{h}}_{k} \odot \mathbf{e}_{k}^{\sf{fb}})^{\sf{H}} \mathbf{p}_{j}\vert^{2} + \sigma_{n}^{2}}\bigg) \bigg] \nonumber \\
    &\, {\approx} \log_{2} 
    \bigg(1 + \frac{\mathbb{E}_{\mathbf{\theta}_k^{\sf{fb}}} [ \vert(\hat{\mathbf{h}}_{k} \odot \mathbf{e}_{k}^{\sf{fb}})^{\sf{H}} \mathbf{p}_{k}\vert^{2} ] }{\sum\limits_{j \in \mathcal{L}}\mathbb{E}_{\mathbf{\theta}_k^{\sf{fb}}, \mathbf{\theta}_k^{\sf{ce}}} [ \vert(\hat{\mathbf{h}}_{k} \odot \mathbf{e}_{k}^{\sf{fb}} \odot (\mathbf{e}_{k}^{\sf{ce}} - \mathbf{1}))^{\sf{H}} \mathbf{p}_{j}\vert^{2} ] + \sum\limits_{j=1 , j\neq k}^{K} \mathbb{E}_{\mathbf{\theta}_k^{\sf{fb}}} [\vert(\hat{\mathbf{h}}_{k} \odot \mathbf{e}_{k}^{\sf{fb}})^{\sf{H}} \mathbf{p}_{j}\vert^{2} ] + \sigma_{n}^{2}}\bigg) \nonumber \\
    &= \log_{2}
    \bigg(1+\frac{\mathbf{p}_{k}^{\sf{H}}(\hat{\mathbf{h}}_{k} \hat{\mathbf{h}}_{k}^{\sf{H}} \odot
    [e^{-\delta_{\sf{fb}}^{2}}\mathbf{1}_{N_{{\sf{t}}}} + (1-e^{-\delta_{\sf{fb}}^{2}})\mathbf{I}_{N_{{\sf{t}}}}])\mathbf{p}_{k}}
    {l_{{\sf{p}},k} + \sigma_{n}^{2}} \bigg).
\end{align}
\noindent\rule{\textwidth}{.5pt}
\end{figure*} 
Subsequently, with the statistical information of ${\mathbf{e}_{k}^{\sf{fb}}}$ and ${\mathbf{e}_{k}^{\sf{ce}}}$, each term of the approximated equation in (\ref{RPEq1}) is reconstructed as the following procedure.
First, $\mathbb{E}_{\mathbf{\theta}_{k}^{\sf{fb}}}[\vert(\hat{\mathbf{h}}_{k} \odot {\mathbf{e}_{k}^{\sf{fb}}})^{\sf{H}} \mathbf{p}_{k} \vert^{2}]$ is rewritten as
\begin{align}
\label{Derv_6} 
& \mathbb{E}_{\mathbf{\theta}_{k}^{\sf{fb}}}[\vert(\hat{\mathbf{h}}_{k} \odot {\mathbf{e}_{k}^{\sf{fb}}})^{\sf{H}} \mathbf{p}_{k} \vert^{2}] = 
\nonumber \\
& \mathbf{p}_{k}^{\sf{H}}(\hat{\mathbf{h}}_{k}\hat{\mathbf{h}}_{k}^{\sf{H}}\odot [e^{-\delta_{\sf{fb}}^{2}}\mathbf{1}_{N_{{\sf{t}}}} + (1-e^{-\delta_{\sf{fb}}^{2}})\mathbf{I}_{N_{{\sf{t}}}}])\mathbf{p}_{k}.
\end{align}
Second, $\sum_{j=1, j \neq k}^{K} \mathbb{E}_{\mathbf{\theta}_{k}^{\sf{fb}}}[\vert (\mathbf{\hat{h}}_{k}^{\sf{H}}\odot{\mathbf{e}_{k}^{\sf{fb}}})^{\sf{H}} \mathbf{p}_{j} \vert^{2}]$ is rewritten as
\begin{align}
\label{Derv_7} 
& \sum_{j=1, j \neq k}^{K} \mathbb{E}_{\mathbf{\theta}_{k}^{\sf{fb}}}[\vert (\mathbf{\hat{h}}_{k}^{\sf{H}}\odot{\mathbf{e}_{k}^{\sf{fb}}})^{\sf{H}} \mathbf{p}_{j} \vert^{2}]  =
\nonumber \\
& \sum_{j=1, j \neq k}^{K} \mathbf{p}_{j}^{\sf{H}}(\hat{\mathbf{h}}_{k}\hat{\mathbf{h}}_{k}^{\sf{H}}\odot[e^{-\delta_{\sf{fb}}^{2}}\mathbf{1}_{N_{{\sf{t}}}} + (1-e^{-\delta_{\sf{fb}}^{2}})\mathbf{I}_{N_{{\sf{t}}}} ])\mathbf{p}_{j}.
\end{align}
Third, owing to the independence between $\mathbf{\theta}_{k}^{\sf{fb}}$ and $\mathbf{\theta}_{k}^{\sf{ce}}$, $\sum_{j\in \mathcal{L}} \mathbb{E}_{\mathbf{\theta}_{k}^{\sf{fb}}, \mathbf{\theta}_{k}^{\sf{ce}}}[\vert(\hat{\mathbf{h}}_{k} \odot {\mathbf{e}_{k}^{\sf{fb}}} \odot 
({\mathbf{e}_{k}^{\sf{ce}}} - \mathbf{1}))^{\sf{H}} \mathbf{p}_{j}\vert^{2}]$ is rewritten as
\begin{align}
\label{Derv_8} 
& \sum_{j\in \mathcal{L}} \mathbb{E}_{\mathbf{\theta}_{k}^{\sf{fb}}, \mathbf{\theta}_{k}^{\sf{ce}}}[\vert(\hat{\mathbf{h}}_{k} \odot {\mathbf{e}_{k}^{\sf{fb}}} \odot 
({\mathbf{e}_{k}^{\sf{ce}}} - \mathbf{1}))^{\sf{H}} \mathbf{p}_{j}\vert^{2}] 
\nonumber \\
& =\sum_{j\in \mathcal{L}} \mathbf{p}_{j}^{\sf{H}}(\hat{\mathbf{h}}_{k}\hat{\mathbf{h}}_{k}^{\sf{H}} 
\odot[e^{-\delta_{\sf{fb}}^{2}}\mathbf{1}_{N_{{\sf{t}}}} + (1-e^{-\delta_{\sf{fb}}^{2}})\mathbf{I}_{N_{{\sf{t}}}}] 
\nonumber \\
& \odot[(1-e^{-\frac{\delta_{\sf{ce}}^{2}}{2}})^{2}\mathbf{1}_{N_{{\sf{t}}}} + (1-e^{-\delta_{\sf{ce}}^{2}})\mathbf{I}_{N_{{\sf{t}}}}])\mathbf{p}_{j}.
\end{align}
By substituting (\ref{Derv_6})-(\ref{Derv_8}) into the approximated equation of (\ref{RPEq1}), $R_{{\sf{p}},k}$ can be represented as the final equation of (\ref{RPEq1}) in which $l_{{\sf{p}},k}$ is given by
\begin{align}\label{lPEq}
& l_{{\sf{p}},k} = \sum\limits_{j \in \mathcal{L}} \mathbf{p}_{j}^{\sf{H}}(\hat{\mathbf{h}}_{k} \hat{\mathbf{h}}_{k}^{\sf{H}}\odot[e^{-\delta_{\sf{fb}}^{2}}\mathbf{1}_{N_{{\sf{t}}}} + (1-e^{-\delta_{\sf{fb}}^{2}})\mathbf{I}_{N_{{\sf{t}}}}] 
\nonumber \\
& \odot [(1-e^{-\frac{\delta_{\sf{ce}}^{2}}{2}})^{2}\mathbf{1}_{N_{{\sf{t}}}} + (1-e^{-\delta_{\sf{ce}}^{2}})\mathbf{I}_{N_{{\sf{t}}}}])\mathbf{p}_{j} + \nonumber \\
& \sum_{j=1, j \neq k}^{K} \mathbf{p}_{j}^{\sf{H}}(\hat{\mathbf{h}}_{k} \hat{\mathbf{h}}_{k}^{\sf{H}} \odot [e^{-\delta_{\sf{fb}}^{2}} \mathbf{1}_{N_{{\sf{t}}}} + (1-e^{-\delta_{\sf{fb}}^{2}})\mathbf{I}_{N_{{\sf{t}}}}])\mathbf{p}_{j}.
\end{align}
 \begin{remark} 
 {\rm \textbf{(Ideal assumption of  perfect CSIR)}:
  Note that our derived rate expressions are general because they consider both the perfect CSIT and CSIR and the imperfect CSIT and perfect CSIR cases for the following reasons. In the imperfect CSIT and perfect CSIR case, because 
$\delta_{\sf{ce}}$ is $0^{\circ}$, $(1-e^{-\frac{\delta_{\sf{ce}}^{2}}{2}})^{2}\mathbf{1}_{N_{{\sf{t}}}} + (1-e^{-\delta_{\sf{ce}}^{2}})\mathbf{I}_{N_{{\sf{t}}}}$ becomes $\mathbf{0}_{N_{{\sf{t}}}}$. 
Therefore, (\ref{Derv_3}) and (\ref{Derv_8}) become $0$; thus, the derived terms (\ref{RCEq1}) and (\ref{RPEq1}) reduce to the rate expressions under the imperfect CSIT and perfect CSIR case presented in \cite{yin2022rate}. In the perfect CSIT and CSIR case, because 
$\delta_{\sf{fb}}$ is also $0^{\circ}$, $e^{-\delta_{\sf{fb}}^{2}}\mathbf{1}_{N_{{\sf{t}}}} + (1-e^{-\delta_{\sf{fb}}^{2}})\mathbf{I}_{N_{{\sf{t}}}}$ becomes $ \mathbf{1}_{N_{{\sf{t}}}}$. Therefore, (\ref{Derv_1}), (\ref{Derv_2}), (\ref{Derv_6}), and (\ref{Derv_7}) become $\vert \mathbf{h}_{k}^{\sf{H}} \mathbf{p}_{\sf{c}} \vert^{2}$, $\sum_{j=1}^{K} \vert \mathbf{h}_{k}^{\sf{H}} \mathbf{p}_{j} \vert^{2}$, $\vert \mathbf{h}_{k}^{\sf{H}} \mathbf{p}_{k} \vert^{2}$, and $\sum_{j=1, j \neq k}^{K} \vert \mathbf{h}_{k}^{\sf{H}} \mathbf{p}_{j} \vert^{2}$, respectively. Thus, the derived terms (\ref{RCEq1}) and (\ref{RPEq1}) reduce to the rate expressions under the perfect CSIT and CSIR case. }
\end{remark} 

\subsection{Problem Formulation}
Our work aims to design an RSMA precoder that matches the actual offered rates to the uneven traffic demands of users across the targeted coverage of the satellite with a limited power budget.
To this end, the optimization problem for the RSMA-based RM framework is formulated as follows: \footnote{ Although the objective function of the formulated problem is based on the L2-norm, it can be readily converted into an L1-norm-based formulation by simply modifying the objective function while retaining the same constraints and optimization process. The respective advantages of both approaches will be discussed in the numerical studies.}
\begin{align}
\nonumber
\mathscr{P}_1: \,\,\,\, 
\minimize_{\mathbf{P}, \mathbf{c}} \,\, \sum_{j=1}^{K} \vert R_{{\sf target}{,j}}-R_{j} \vert^{2}
\end{align}\setcounter{equation}{19}
\begin{subequations}\label{condition}
\begin{align}
\mathrm{s.t.}\,\,\,\,\,\,
\label{PF1CST1}
&R_{{\sf c},k} \geq \sum_{j=1}^{K}C_{j}, \,\, \forall k \in \mathcal{K},\\
\label{PF1CST2}
&C_k \geq 0, \,\, \forall k \in \mathcal{K}, \\
\label{PF1CST3}
&{(\mathbf{P}\mathbf{P}^{\sf{H}}})_{{n_{\sf{t}}},{n_{\sf{t}}}} \leq \frac{P_{\sf{t}}}{N_{\sf{t}}}, \,\, \forall {n_{\sf{t}}} \in \mathcal{N}.
\end{align}
\end{subequations}
In the formulated problem $\mathscr{P}_1$, $R_{{\sf target}{,k}}$ denotes the traffic demand of the $k$-th user, and $R_{k}$ denotes the total ergodic rate of the $k$-th user, which is represented as $R_{k}=R_{{\sf{p}},k}+C_{k}$. Herein, $C_{k}$ is the $k$-th user's portion from the common ergodic rate.
$\mathbf{P}=[\mathbf{p}_{\sf{c}}, \mathbf{p}_1,\cdots,\mathbf{p}_K]\in \mathbb{C}^{{N_{\sf{t}}}\times{(K+1)}}$ and $\mathbf{c}=[C_1,\cdots, C_K]^{\sf{T}}$ denote a precoding matrix and a vector consisting of common rate portions of the users, respectively. 
The constraint (\ref{PF1CST1}) enables $s_{\sf{c}}$ to be decodable by all users, and constraint (\ref{PF1CST2}) ensures the common rate portions to be non-negative values.
The constraint (\ref{PF1CST3}) represents the per-feed power constraint that reflects the inability to share energy resources among the dedicated high-power amplifiers (HPAs) of antenna feeds \cite{perez2019signal}. Meanwhile, $P_{\sf{t}}$ denotes the total power budget for the satellite.

Since $\mathscr{P}_1$ is a non-convex problem that is difficult to solve directly, we reformulate it into a tractable convex problem based on the SCA method in Section \uppercase\expandafter{\romannumeral3}.

\section{Proposed RSMA-Based Rate-Matching Framework}
To convert $\mathscr{P}_1$ into a convex problem, a slack variable $\alpha_{{\sf{p}},k}$ in regard to $R_{{\sf{p}},k}$ is introduced. Thus, $\mathscr{P}_1$ is reformulated as
\begin{align}
\nonumber
\mathscr{P}_2: \,\,\,\, 
\minimize_{\mathbf{P}, \mathbf{c}, \boldsymbol{\alpha}_{\sf{p}}} 
{\Vert {\mathbf{r}_{\sf target}- ( \mathbf{c} + \boldsymbol{\alpha}_{\sf{p}} )} \Vert}_{2}^{2}
\end{align}\setcounter{equation}{20}
\begin{subequations}
\begin{align}
\mathrm{s.t.}\,\,\,\,\,\,
\label{PF2CST1}
&R_{{\sf{p}},k} \geq \alpha_{{\sf{p}},k},   \,\, \alpha_{{\sf{p}},k} \geq 0, \,\, \forall k \in \mathcal{K}, \\
& \textrm{(\ref{PF1CST1})}, \,\, \textrm{(\ref{PF1CST2})}, \,\,  \textrm{(\ref{PF1CST3})}, \nonumber
\end{align}
\end{subequations}
where $\mathbf{r}_{\sf target}$ and $\boldsymbol{\alpha}_{\sf{p}}$ denote $\mathbf{r}_{\sf target}=[R_{{\sf target}, {1}},\cdots,R_{{\sf target}, {K}}]^{\sf{T}}$ and $\boldsymbol{\alpha}_{\sf{p}}=[\alpha_{{\sf{p}},1},\cdots,\alpha_{{\sf{p}},K}]^{\sf{T}}$, respectively. 
By doing so, we convert the problem into one that minimizes the difference between the traffic demands and the lower bound of $R_{k}$, i.e., $\alpha_{{\sf{p}},k} + C_{k}$, which turns out to be a convex function with respect to $[\mathbf{c}^{\sf{T}}, \boldsymbol{\alpha}_{\sf{p}}^{\sf{T}}]^{\sf{T}}$.
However, an optimal solution to the relaxed problem is always to transmit at the maximum allowable power levels at the satellite, even when plenty of offered rate is wasted and unused due to the lower bound constraint (\ref{PF2CST1}).
{ This poses a critical technical challenge in LEO SATCOM with uneven demands, where the offered rates need to be carefully allocated according to the individual traffic demands while efficiently utilizing the transmit power budget at LEO satellites, given the inherently power-hungry nature of their payloads.
To address this issue, we incorporate a power minimization term into the objective function, weighted by the regularization parameter $\eta$ ranging from 0 to 1 bps/Hz/W, as follows:}
\begin{align}
\nonumber
\mathscr{P}_3: \,\,\,\, 
\minimize_{\mathbf{P}, \mathbf{c}, \boldsymbol{\alpha}_{{\sf{p}}}} \,
\eta{\Vert {\mathbf{r}_{\sf target}- (\mathbf{c} + \boldsymbol{\alpha}_{\sf{p}})} \Vert}_{2}^{2} + (1-\eta) \Vert {\mathbf{P}} \Vert_{\sf{F}}^{2}
\end{align}\setcounter{equation}{21}
\begin{subequations}
\begin{align}
\mathrm{s.t.}\,\,\,\,\,\,
& \textrm{(\ref{PF1CST1})}, \,\, \textrm{(\ref{PF1CST2})}, \,\,  \textrm{(\ref{PF1CST3})}, \,\, \textrm{(\ref{PF2CST1})}, \nonumber
\end{align} 
\end{subequations}
where $\Vert \mathbf{P} \Vert_{\sf{F}}^{2}$ is the square of the Frobenius norm of $\mathbf{P}$, which denotes the total transmission power. 
{ This term penalizes excessive transmit power usage at the LEO satellite, thereby reducing overall power consumption while maintaining effective RM performance.
To be more specific, by adding $\Vert \mathbf{P} \Vert_{\sf{F}}^{2}$ to the objective function while preserving the convexity, the LEO satellite no longer transmits at the maximum allowable power levels. Instead, it would increase the transmission power level as long as the inequality (\ref{PF2CST1}) holds with equality (i.e., $R_{{\sf{p}},k} = \alpha_{{\sf{p}},k}$), preventing the wasted offered rates.}
In addition, it is worth pointing out that the power efficiency can be enhanced by adjusting the regularization parameter $\eta$ to a reasonable value for the desired RM.
That is, $\eta$ can be flexibly adjusted according to the traffic demand of users so that the satellite can efficiently use the transmission power.
By doing so, the sustainability of satellite networks, which are highly power-limited, can be increased.
However, $\mathscr{P}_3$ is still a non-convex problem, owing to the constraint (\ref{PF1CST1}) and first equation of (\ref{PF2CST1}). To convexify these constraints, slack variables $a_{{\sf{c}},k}$ and $b_{{\sf{p}},k}$ are further introduced as follows:
\begin{align}
\nonumber
\mathscr{P}_4: \,\,\,\, 
\minimize_{\mathbf{P}, \mathbf{c}, \boldsymbol{\alpha}_{\sf{p}}, \mathbf{a}_{\sf{c}}, \mathbf{b}_{\sf{p}}} \,
\eta{\Vert {\mathbf{r}_{\sf target}- (\mathbf{c} + \boldsymbol{\alpha}_{\sf{p}})} \Vert}_{2}^{2} + (1-\eta) \Vert {\mathbf{P}} \Vert_{\sf{F}}^{2}
\end{align}\setcounter{equation}{22}
\begin{subequations}\label{condition}
\begin{align}
\mathrm{s.t.}\,\,\,\,\,\,
\label{PF4CST1}
&\log_{2}{(1+a_{{\sf{c}},k})} \geq \sum_{j=1}^{K}{C_j}, \,\, \forall k \in \mathcal{K}, \\
\label{PF4CST2}
&\frac{\mathbf{p}_{\sf{c}}^{\sf{H}}(\hat{\mathbf{h}}_{k} \hat{\mathbf{h}}_{k}^{\sf{H}} \odot
    [e^{-\delta_{\sf{fb}}^{2}}\mathbf{1}_{N_{{\sf{t}}}} + (1-e^{-\delta_{\sf{fb}}^{2}})\mathbf{I}_{N_{{\sf{t}}}}])\mathbf{p}_{\sf{c}}}
    {l_{{\sf{c}},k} + \sigma_{n}^{2}} \geq a_{{\sf{c}},k},
\nonumber \\
& \forall k \in \mathcal{K},\\
\label{PF4CST3}
&\log_{2}{(1+b_{{\sf{p}},k})} \geq \alpha_{{\sf{p}},k},   \,\, \forall k \in \mathcal{K},\\
\label{PF4CST4}
&\frac{\mathbf{p}_{k}^{\sf{H}}(\hat{\mathbf{h}}_{k} \hat{\mathbf{h}}_{k}^{\sf{H}} \odot
    [e^{-\delta_{\sf{fb}}^{2}}\mathbf{1}_{N_{{\sf{t}}}} + (1-e^{-\delta_{\sf{fb}}^{2}})\mathbf{I}_{N_{{\sf{t}}}}])\mathbf{p}_{k}}
    {l_{{\sf{p}},k} + \sigma_{n}^{2}} \geq b_{{\sf{p}},k},
\nonumber \\
& \forall k \in \mathcal{K},\\
\label{PF4CST5}
&a_{{\sf{c}},k} \geq 0, b_{{\sf{p}},k} \geq 0, \alpha_{{\sf{p}},k} \geq 0, \,\, \forall k \in \mathcal{K},\\
& \textrm{(\ref{PF1CST2})}, \,\,  \textrm{(\ref{PF1CST3})}, \nonumber 
\end{align}
\end{subequations}
where $\mathbf{a}_{\sf{c}}=[a_{{\sf{c}},1},\cdots,a_{{\sf{c}},K}]^{\sf{T}}$ and $\mathbf{b}_{\sf{p}}=[b_{{\sf{p}},1},\cdots,b_{{\sf{p}},K}]^{\sf{T}}$. In the reformulated problem $\mathscr{P}_{4}$, owing to the constraints (\ref{PF4CST1}) and (\ref{PF4CST2}), $R_{{\sf{c}}, k} \geq \log_{2}{(1+a_{{\sf{c}}, k})} \geq \sum_{j=1}^{K}{C_{j}}$ holds. Moreover, owing to constraints (\ref{PF4CST3}) and (\ref{PF4CST4}), $R_{{\sf{p}}, k} \geq \log_{2}{(1+b_{{\sf{p}}, k})} \geq \alpha_{{\sf{p}},k}$ holds. Therefore, $\mathscr{P}_4$ preserves $\mathscr{P}_3$. 
However, the reformulated problem $\mathscr{P}_4$ is still a non-convex problem {due to} the constraints (\ref{PF4CST2}) and (\ref{PF4CST4}). To convexify these constraints, we first expand the constraint (\ref{PF4CST2}) as 
\begin{align}\label{C_SINR_Eq}
f_{1}{(\mathbf{p_{\sf{c}}}, a_{{\sf{c}},k})} & \triangleq \frac{\mathbf{p}_{\sf{c}}^{\sf{H}}(\hat{\mathbf{h}}_{k} \hat{\mathbf{h}}_{k}^{\sf{H}} \odot
    [e^{-\delta_{\sf{fb}}^{2}}\mathbf{1}_{N_{{\sf{t}}}} + (1-e^{-\delta_{\sf{fb}}^{2}})\mathbf{I}_{N_{{\sf{t}}}}])\mathbf{p}_{\sf{c}}}
    {a_{{\sf{c}},k}} 
    \nonumber \\
    & \geq l_{{\sf{c}},k} + \sigma_{n}^{2}.
\end{align}

\begin{figure*}[!t]
\begin{align}\label{C_SCA_Eq}
\small
\hat{f}_{1}{(\mathbf{p}_{\sf{c}}, a_{{\sf{c}},k} ; \mathbf{p}_{\sf{c}}^{[n]}, a_{{\sf{c}},k}^{[n]})}  \triangleq 
&2\text{Re}\bigg[\frac{(\mathbf{p}_{\sf{c}}^{[n]})^{\sf{H}}(\hat{\mathbf{h}}_{k} \hat{\mathbf{h}}_{k}^{\sf{H}} \odot[e^{-\delta_{\sf{fb}}^{2}}\mathbf{1}_{N_{{\sf{t}}}}\! +\! (1\!-\!e^{-\delta_{\sf{fb}}^{2}})\mathbf{I}_{N_{{\sf{t}}}}])\mathbf{p}_{\sf{c}}}{a_{{\sf{c}},k}^{[n]}}\bigg]  \nonumber \\
& - \frac{(\mathbf{p}_{\sf{c}}^{[n]})^{\sf{H}}(\hat{\mathbf{h}}_{k} \hat{\mathbf{h}}_{k}^{\sf{H}} \odot[e^{-\delta_{\sf{fb}}^{2}}\mathbf{1}_{N_{{\sf{t}}}}\! +\! (1\!-\!e^{-\delta_{\sf{fb}}^{2}})\mathbf{I}_{N_{{\sf{t}}}}])\mathbf{p}_{\sf{c}}^{[n]}}{(a_{{\sf{c}},k}^{[n]})^2} a_{{\sf{c}},k}.
\end{align}
\vspace{-7mm}
\end{figure*}
\begin{figure*}[!t]
\begin{align} \label{P_SCA_Eq}
\small
\hat{f}_{2}{(\mathbf{p}_{k}, b_{{\sf{p}},k} ; \mathbf{p}_{k}^{[n]}, b_{{\sf{p}},k}^{[n]})}    \triangleq
2\text{Re}\bigg[\frac{(\mathbf{p}_{k}^{[n]})^{\sf{H}}(\hat{\mathbf{h}}_{k} \hat{\mathbf{h}}_{k}^{\sf{H}} \odot[e^{-\delta_{\sf{fb}}^{2}}\mathbf{1}_{N_{{\sf{t}}}}\! +\! (1\!-\!e^{-\delta_{\sf{fb}}^{2}})\mathbf{I}_{N_{{\sf{t}}}}])\mathbf{p}_{k}}{b_{{\sf{p}},k}^{[n]}}\bigg] \nonumber \\
- \frac{(\mathbf{p}_{k}^{[n]})^{\sf{H}}(\hat{\mathbf{h}}_{k} \hat{\mathbf{h}}_{k}^{\sf{H}} \odot[e^{-\delta_{\sf{fb}}^{2}}\mathbf{1}_{N_{{\sf{t}}}}\! +\! (1\!-\!e^{-\delta_{\sf{fb}}^{2}})\mathbf{I}_{N_{{\sf{t}}}}])\mathbf{p}_{k}^{[n]}}{(b_{{\sf{p}},k}^{[n]})^2} b_{{\sf{p}},k}.
\end{align}
\noindent\rule{\textwidth}{.5pt}
\end{figure*}
Note that because $l_{{\sf{c}}, k} + \sigma_{n}^2 \geq 0$ and $a_{{\sf{c}}, k} \geq 0$, expanding the equation as done above does not cause a loss of generality.
Herein, the function $f_{1}{(\mathbf{p}_{\sf{c}}, a_{{\sf{c}},k})}$ defined at the {left-hand side (LHS)} of equation (\ref{C_SINR_Eq}) is a quadratic over linear function, which is a convex function with respect to $[\mathbf{p}_{\sf{c}}^{\sf{T}}, a_{{\sf{c}},k}]^{\sf{T}}$. Hence, following the first-order condition property of a convex function, $f_{1}{(\mathbf{p_{\sf{c}}}, a_{{\sf{c}},k})}$ can be approximated and lower bounded as an affine function using the first-order Taylor expansion at point $[(\mathbf{p_{\sf{c}}}^{[n]} )^{\sf{T}}, {a_{{\sf{c}},k}^{[n]}}]^{\sf{T}}$ as {(\ref{C_SCA_Eq}) at the top of this page.} 
With a similar manner, the constraint (\ref{PF4CST4}) can be expanded as
\begin{align}
\label{P_SINR_Eq}
f_{2}{(\mathbf{p}_{k}, b_{{\sf{p}},k})} & \triangleq \frac{\mathbf{p}_{k}^{\sf{H}}(\hat{\mathbf{h}}_{k} \hat{\mathbf{h}}_{k}^{\sf{H}} \odot [e^{-\delta_{\sf{fb}}^{2}}\mathbf{1}_{N_{{\sf{t}}}} + (1-e^{-\delta_{\sf{fb}}^{2}})\mathbf{I}_{N_{{\sf{t}}}}])\mathbf{p}_{k}} {b_{{\sf{p}},k}}
\nonumber \\
& \geq l_{{\sf{p}},k} + \sigma_{n}^{2}.
\end{align} 
Subsequently, $f_{2}{(\mathbf{p}_{k}, b_{{\sf{p}},k})}$ can be approximated and lower bounded at $[(\mathbf{p}_{k}^{[n]})^{\sf{T}},{b_{{\sf{p}},k}^{[n]}}]^{\sf{T}}$ as {(\ref{P_SCA_Eq}) at the top of this page.} 

In both (\ref{C_SCA_Eq}) and (\ref{P_SCA_Eq}), $n$ represents the $n$-th SCA iteration. Finally, constraints (\ref{PF4CST2}) and (\ref{PF4CST4}) can be near equivalently approximated as follows:
 \begin{align}
 \label{C_SCA_AP}
 &\hat{f}_{1}{(\mathbf{p}_{\sf{c}}, a_{{\sf{c}},k};\mathbf{p}_{\sf{c}}^{[n]}, a_{{\sf{c}},k}^{[n]})} \geq l_{{\sf{c}},k} + \sigma_{n}^{2}, \\
 \label{P_SCA_AP}
 &\hat{f}_{2}{(\mathbf{p}_{k}, b_{{\sf{p}},k} ; \mathbf{p}_{k}^{[n]}, b_{{\sf{p}},k}^{[n]})} \geq l_{{\sf{p}},k} + \sigma_{n}^{2}.
 \end{align}
These are convex constraints according to the following lemma. 

\begin{lemma}
$\hat{\mathbf{h}}_{k} \hat{\mathbf{h}}_{k}^{\sf{H}} \odot
    [e^{-\delta_{\sf{fb}}^{2}}\mathbf{1}_{N_{{\sf{t}}}} + (1-e^{-\delta_{\sf{fb}}^{2}})\mathbf{I}_{N_{{\sf{t}}}}]$ and $\hat{\mathbf{h}}_{k} \hat{\mathbf{h}}_{k}^{\sf{H}} \odot
    [(1-e^{-\frac{\delta_{\sf{ce}}^{2}}{2}})^{2}\mathbf{1}_{N_{{\sf{t}}}} + (1-e^{-\delta_{\sf{ce}}^{2}})\mathbf{I}_{N_{{\sf{t}}}}]$ are PSD matrices.
\end{lemma}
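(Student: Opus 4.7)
The plan is to establish both claims by invoking the Schur product theorem (the Hadamard product of two positive semi-definite matrices is positive semi-definite), after verifying that each factor on the left and right of the $\odot$ operator is itself PSD. This reduces the proof to checking PSD-ness of three rather simple building blocks: the rank-one matrix $\hat{\mathbf{h}}_k\hat{\mathbf{h}}_k^{\sf H}$, the all-ones matrix $\mathbf{1}_{N_{\sf t}}$, and the identity $\mathbf{I}_{N_{\sf t}}$, together with a sign check on the scalar coefficients.

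First, I would observe that $\hat{\mathbf{h}}_k\hat{\mathbf{h}}_k^{\sf H}$ is a rank-one Hermitian outer product, so for any $\mathbf{x}\in\mathbb{C}^{N_{\sf t}}$ we have $\mathbf{x}^{\sf H}\hat{\mathbf{h}}_k\hat{\mathbf{h}}_k^{\sf H}\mathbf{x}=|\hat{\mathbf{h}}_k^{\sf H}\mathbf{x}|^2\geq 0$, giving $\hat{\mathbf{h}}_k\hat{\mathbf{h}}_k^{\sf H}\succcurlyeq 0$. Next, $\mathbf{1}_{N_{\sf t}}=\mathbf{1}\mathbf{1}^{\sf T}$ is likewise a rank-one outer product and hence PSD, and $\mathbf{I}_{N_{\sf t}}$ is trivially PSD. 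Since $\delta_{\sf fb}\in\mathbb{R}$, the scalars $e^{-\delta_{\sf fb}^2}$ and $1-e^{-\delta_{\sf fb}^2}$ both lie in $[0,1]$; similarly $(1-e^{-\delta_{\sf ce}^2/2})^2\geq 0$ and $1-e^{-\delta_{\sf ce}^2}\in[0,1]$. A non-negative linear combination of PSD matrices remains PSD, so
\[
e^{-\delta_{\sf fb}^2}\mathbf{1}_{N_{\sf t}}+(1-e^{-\delta_{\sf fb}^2})\mathbf{I}_{N_{\sf t}}\succcurlyeq 0,
\]
and an analogous inequality holds for the $\delta_{\sf ce}$-bracket.

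With both factors established as PSD, I would apply the Schur product theorem to conclude that each Hadamard product appearing in the lemma is PSD. This closes both assertions in a single line and hands back the convexity of $\mathbf{p}_{\sf c}^{\sf H}(\cdot)\mathbf{p}_{\sf c}$ and $\mathbf{p}_k^{\sf H}(\cdot)\mathbf{p}_k$ needed in the SCA reformulation of constraints (\ref{PF4CST2}) and (\ref{PF4CST4}).

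I do not anticipate a genuinely difficult step; the only minor care needed is the sign check on the scalars, which is why the assumption $\delta_{\sf fb},\delta_{\sf ce}\in\mathbb{R}$ (equivalently $\delta_{\sf fb}^2,\delta_{\sf ce}^2\geq 0$) is explicitly used to guarantee that both bracketed combinations remain convex combinations of PSD matrices rather than indefinite mixtures. Everything else is a direct invocation of standard matrix-analytic facts.
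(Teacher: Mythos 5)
Your proposal is correct and follows essentially the same route as the paper: the paper's Appendix A likewise notes that each bracketed matrix is PSD (being a non-negative combination of $\mathbf{1}_{N_{\sf t}}$ and $\mathbf{I}_{N_{\sf t}}$ since $e^{-\delta^2}\leq 1$ — indeed it arises as an expectation of an outer product) and then concludes via the fact that the Hadamard product of PSD matrices is PSD, exactly your Schur product step. The only difference is that the paper spends most of the appendix deriving the bracket matrices as the closed-form expectations in (12)–(13), which your proof takes as given; that derivation is not needed for the PSD claim itself.
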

\begin{proof}
    {Please refer to Appendix A.} 
\end{proof}
Furthermore, to effectively handle constraints (\ref{PF4CST1}) and (\ref{PF4CST3}) with low complexity using the CVX toolbox, these constraints can be equivalently approximated into a second-order cone (SOC) constraint, as described in \cite{yin2022rate}. 
By doing so, the constraint (\ref{PF4CST1}) can be approximated as
\begin{align}
    \label{C_SOC_Eq}
    & a_{{\sf{c}},k} + v_{{\sf{c}},k}^{[n]} - \sum_{j=1}^{K}{C_{j}\ln{2}}
    \nonumber \\ 
    & \geq \bigg \Vert \bigg[ 2 \sqrt{u_{{\sf{c}},k}^{[n]}} \,\,\, , \,\,\, a_{{\sf{c}},k}-v_{{\sf{c}},k}^{[n]} + \sum_{j=1}^{K}{C_{j}\ln{2}} \bigg] \bigg \Vert_{2}, 
\end{align}
where $v_{{\sf{c}},k}^{[n]} = \ln{(1 + a_{{\sf{c}},k}^{[n]})} + \frac{a_{{\sf{c}},k}^{[n]}}{1+a_{{\sf{c}},k}^{[n]}}$ and 
$u_{{\sf{c}},k}^{[n]} = \frac{(a_{{\sf{c}},k}^{[n]})^2}{1+a_{{\sf{c}},k}^{[n]}}$. 
In a similar manner, the constraint (\ref{PF4CST3}) also can be near equivalently approximated into a SOC constraint as
\begin{align}
    \label{P_SOC_Eq}
    & b_{{\sf{p}},k} + v_{{\sf{p}},k}^{[n]} - \alpha_{{\sf{p}},k} \ln{2}
    \nonumber \\
    & \geq \bigg \Vert \bigg[ 2 \sqrt{u_{{\sf{p}},k}^{[n]}} \,\,\, , \,\,\, b_{{\sf{p}},k}-v_{{\sf{p}},k}^{[n]} + \alpha_{{\sf{p}},k}\ln{2} \bigg] \bigg \Vert_{2},
\end{align}
where $v_{{\sf{p}},k}^{[n]} = \ln{(1 + b_{{\sf{p}},k}^{[n]})} + \frac{b_{{\sf{p}},k}^{[n]}}{1+b_{{\sf{p}},k}^{[n]}}$
and $u_{{\sf{p}},k}^{[n]} = \frac{(b_{{\sf{p}},k}^{[n]})^2}{1+b_{{\sf{p}},k}^{[n]}}$.

In conclusion, a joint optimization problem for matching the uneven traffic demands and minimizing the transmission power based on the RSMA in the presence of imperfect CSIT and CSIR is reformulated into a convex problem as follows:
\begin{align}
\nonumber
\mathscr{P}_5: \,\,\,\, 
\minimize_{\mathbf{P}, \mathbf{c}, \boldsymbol{\alpha}_{{\sf{p}}}, \mathbf{a}_{\sf{c}}, \mathbf{b}_{{\sf{p}}}} \,
\eta{\Vert {\mathbf{r}_{\sf target}- (\mathbf{c} + \boldsymbol{\alpha}_{\sf{p}})} \Vert}_{2}^{2} + (1-\eta) \Vert {\mathbf{P}} \Vert_{\sf{F}}^{2}
\end{align}\setcounter{equation}{31}
\begin{subequations}\label{condition}
\begin{align}
\mathrm{s.t.}\,\,\,\,\,\,
& \textrm{(\ref{PF1CST2})}, \,\,  \textrm{(\ref{PF1CST3})}, \,\, \textrm{(\ref{PF4CST5})}, \,\, \textrm{(\ref{C_SCA_AP})}, \,\, \textrm{(\ref{P_SCA_AP})}, \,\, \textrm{(\ref{C_SOC_Eq})}, \,\, \textrm{(\ref{P_SOC_Eq})}. \nonumber
\end{align}
\end{subequations}
This reformulated problem $\mathscr{P}_5$ can be solved effectively using the CVX toolbox \cite{grant2014cvx}. The detailed procedure of our proposed scheme is summarized in \textbf{Algorithm \ref{Algorithm 1}}.

\begin{algorithm}[!t]
\caption{SCA-Based Power-Efficient RM Framework}\label{Algorithm 1}
\begin{algorithmic}
\State \textbf{Input}: $P_{t}$, $\mathbf{r}_{\sf{target}}$, $\eta$, $\hat{\mathbf{H}}$, $\delta_{\sf{fb}}$, $\delta_{\sf{ce}}$, $\epsilon$, $N$
\State \textbf{Initialize}: $\mathbf{P}^{[0]}$, $\mathbf{a}_{{\sf{c}}}^{[0]}$, $\mathbf{b}_{{\sf{p}}}^{[0]}$, $D^{[0]}$, $\epsilon$, $N$, $n \leftarrow 0$
\Repeat
      \State $n \leftarrow n+1$ 
      \State Solve $\mathscr{P}_5$ and obtain a solution $\mathbf{x}^{[\star]} = (\mathbf{P}^{[\star]}$, $\mathbf{c}^{[\star]}$, $\boldsymbol{\alpha}_{{\sf{p}}}^{[\star]}$, $\mathbf{a}_{{\sf{c}}}^{[\star]}$, $\mathbf{b}_{{\sf{p}}}^{[\star]})$
      \State Update  $\mathbf{P}^{[n]} \leftarrow \mathbf{P}^{[\star]}, \,\, \mathbf{a}_{{\sf{c}}}^{[n]} \leftarrow \mathbf{a}_{{\sf{c}}}^{[\star]}, \,\, \mathbf{b}_{{\sf{p}}}^{[n]} \leftarrow \mathbf{b}_{{\sf{p}}}^{[\star]}$
\Until{$\vert D^{[n]}-D^{[n-1]} \vert \leq \epsilon \,\, {\rm{or}} \,\, n \geq N$}
\State \textbf{Return}: $\mathbf{P}^{[\star]}$, $\mathbf{c}^{[\star]}$
\State \textbf{Output}: Calculate instantaneous total rate, $R_{k}^{[\star]}$ using $\mathbf{P}^{[\star]}$, $\mathbf{c}^{[\star]}$, $ \forall k \in \mathcal{K}$
\end{algorithmic}
\end{algorithm}
In \textbf{Algorithm \ref{Algorithm 1}}, $D^{[n]}$, $N$, and $\epsilon$ denote the value of ${\Vert \mathbf{r}_{\sf{target}}-(\mathbf{c} + \boldsymbol{\alpha}_{\sf{p}}) \Vert}_{2}^2$ at the $n$-th SCA iteration, maximum number of SCA iterations, and tolerance value, respectively. 
In other words, \textbf{Algorithm \ref{Algorithm 1}} terminates when the absolute value of the difference between $D^{[n]}$ and $D^{[n-1]}$ is lower than $\epsilon$, or the SCA iteration number $n$ becomes $N$. 
Note that as mentioned in \textbf{Algorithm \ref{Algorithm 1}}, we compute the proposed RSMA-based RM precoder, consisting of $\mathbf{P}^{[\star]}$ and $\mathbf{c}^{[\star]}$, by solving $\mathscr{P}_{5}$ iteratively, and then calculate the instantaneous rate $R_{k}^{[\star]}$, $ \forall k \in \mathcal{K}$, using the obtained precoder. 

\begin{remark} 
{\rm \textbf{(Key difference from power minimization problems with QoS constraints)}: 
{In power minimization problems with QoS constraints, infeasible solutions can result if the available transmit power budget is insufficient or the user channel condition is unfavorable. This, in turn, significantly undermines the communication reliability of multibeam LEO SATCOM, a system inherently constrained by power limitations, thus hindering its effectiveness in real-world applications.
The proposed RM framework, on the other hand, addresses these challenges by consistently ensuring feasible solutions through the adaptive matching of actual offered rates to traffic demands. Therefore, it guarantees reliable communication services regardless of the available power budget or user channel conditions. 
Furthermore, in ideal scenarios, where sufficient power and favorable channel conditions are available, our framework reduces to the conventional power minimization problem with QoS constraints, demonstrating its versatility and generality across various operating conditions.}}
\end{remark}

\begin{remark}
{\rm\textbf{(Complexity analysis)}: The constraints in $\mathscr{P}_{5}$ can be represented by SOC constraints so that $\mathscr{P}_{5}$ can be solved efficiently using a standard interior-point method. Thus, the computational complexity of the proposed RSMA-based RM algorithm in a big-O sense is given by $\mathcal{O}( N ( K N_{{\sf{t}}} )^{3.5})$ \cite{ye2011interior}.}
\end{remark}

\begin{remark}
{\rm\textbf{(Convergence analysis)}: In $\mathscr{P}_{5}$, an obtained solution at the $n$-th iteration $\mathbf{x}^{[n]}=(\mathbf{P}^{[n]}, \mathbf{c}^{[n]}, \boldsymbol{\alpha}_{{\sf{p}}}^{[n]}, \mathbf{a}_{{\sf{c}}}^{[n]}, \mathbf{b}_{{\sf{p}}}^{[n]})$ is included in the feasible set of the $(n+1)$-th iteration. Thus, the proposed algorithm generates a non-increasing sequence as the number of iterations increases. Moreover, since the objective function of $\mathscr{P}_{5}$ is the summation of the L2-norm and Frobenius norm, which is always greater or equal to zero, the objective value of $\mathscr{P}_{5}$ is bounded below. {Thanks to these features, the proposed SCA-based RM algorithm is guaranteed to converge to the set of stationary points of $\mathscr{P}_1$ \cite{marks1978general}.}}

\end{remark}

\section{Performance Evaluation}

In this section, the performance of the proposed RSMA-based power-efficient RM framework (denoted as ``{\sf{RM-RSMA}}'') is evaluated based on simulation results.
{ In particular, the proposed scheme is compared to the other schemes for satisfying the individual user traffic demands, which are the SDMA scheme (denoted as ``{\sf{RM-SDMA}}''), MMSE-based RSMA scheme (denoted as ``{\sf{MMSE-RSMA}}'')\cite{cui2023energy}, multicast-based RSMA scheme (denoted as ``{\sf{Multicast-RSMA}}'')\cite{10304489}, and four-color frequency reuse scheme (denoted as ``{\sf{RM-4color}}'').} 
Specifically, ``{\sf{MMSE-RSMA}}'' designs normalized private precoding vectors via the MMSE method while optimizing the power ratios of the private messages and the common precoding vector \cite{cui2023energy}. 
{``{\sf{Multicast-RSMA}}'' is a method based on the RSMA-based multibeam multicast transmission from \cite{10304489}, which is modified to meet the traffic demands of individual users.}
``{\sf{RM-4color}}'' denotes the conventional four-color reuse scheme in which the satellite partially reuses an available frequency band by dividing it into four sub-bands \cite{perez2019signal}. The proposed scheme is also compared with the conventional technique, namely, the RSMA-based MMF scheme (denoted as ``{\sf{MMF-RSMA}}'') that aims to improve the quality of the entire network \cite{yin2022rate}. 
Our simulation results consider both the perfect CSIT and CSIR case and the imperfect CSIT and CSIR case with phase perturbation due to channel estimation and feedback errors.  
\begin{figure}[!b]
\centering
 		\includegraphics[width=0.8\linewidth]{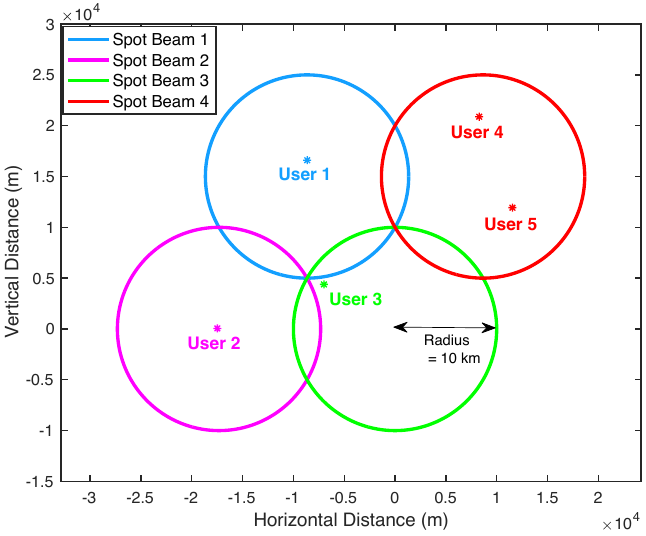}
 		\caption{{Beam pattern of four spot beams and location of users uniformly distributed within each beam.}}
    	\label{Fig_beam} 
\end{figure}
{
The channel model is set as introduced in Section \uppercase\expandafter{\romannumeral2}. 
The altitude of the satellite is assumed to be $\num{600}$ km, where LEO satellites are typically deployed, and the related simulation parameters are set according to 3GPP NTN standards \cite{3gpp_ntn}, as presented in Table  \ref{Table1}.
Given the effective isotropically radiated power (EIRP) density, calculated as $P_{\sf{t}}G_{\sf{max}}/N_{\sf{t}}B$, the per-feed transmit power budget is given by $P_{\sf{t}}/N_{\sf{t}} = 21.52$ dBm.
Moreover, the rain fading parameters $(\mu, \sigma) = (-2.6, 1.63)$ are set to reflect a temperate central European climate \cite{zheng2012generic}.
The noise variance, regularization parameter, maximum iteration number, and tolerance value are set to be $\sigma_n^2 = 1$, $\eta = 0.91$ bps/Hz/W, $N = 20$, and $\epsilon = 10^{-4}$, respectively. 
The number of feeds and users are set to $N_{\sf{t}} = 4$ and $K = 5$, respectively, representing an overloaded scenario that frequently arises in multibeam LEO SATCOM.} 
The beam pattern of four spot beams and the locations of the users are illustrated in Fig. \ref{Fig_beam}. The simulation results are obtained by averaging 100 channel realizations.

\begin{table}[!t]\renewcommand{\arraystretch}{1} 
\centering
{
\caption{{Simulation Parameters \cite{3gpp_ntn}}}
\label{Table1}
\centering
\begin{tabular}{|c|c|c|}
\hlineB{3}
\textbf{Abbreviation} & \textbf{Definition} & \textbf{Value} \\
\hlineB{3}
\hhline{|---|}
{$f_{\sf{c}}$} & {Frequency band (Carrier frequency)} & {Ka ($20$ GHz)}\\
{$B$} & {Bandwidth} & {$400$ MHz}\\
{$\theta_{\mathrm{3dB}}$} & {3 dB angle} & {$1.7647^{\circ}$}\\
{$G_{\sf{max}}$} & {Maximum beam gain} & {$38.5$ dBi}\\
{$G_{\sf{R}}$} & {User terminal antenna gain} & {$39.7$ dBi}\\
{$T_{\sf{sys}}$} & {Noise temperature} & {$150$ K}\\
{$-$} & {EIRP density} & {$4$ dBW/MHz}\\
{$-$} & {Satellite altitude} & {$600$ km}\\
{$-$} & {Beam diameter} & {$20$ km}\\
\hlineB{3}
\end{tabular}}
\end{table}

The achievable rates of each scheme under the perfect CSIT and CSIR and the imperfect CSIT and CSIR are compared in Fig. \ref{result_1} and  Fig. \ref{result_3}, respectively. {The traffic demand $\mathbf{r}_{\sf{target}}$ is assumed to be $[2, 2, 3, 3.5, 4]^{\sf{T}}$ bps/Hz in both cases.} 
As shown in the figures, the white bar at the far left of each user indicates the traffic demand of each user.
The proposed ``{\sf{RM-RSMA}}'' scheme satisfies the traffic demands of each user much better than other benchmark schemes in both cases. The superiority of ``{\sf{RM-RSMA}}'' 
turns up owing to the occurring tendency while satisfying the uneven traffic demands. This is because all the schemes except ``{\sf{MMF-RSMA}}'' first tend to fulfill the traffic demands of the users with comparatively high demands, as their objective is to minimize the summation of the users' {unused} or {unmet rate}.
Thus, as the {spatial degrees of freedom} is limited in the user-overloaded case, ``{\sf{RM-SDMA}}'' causes significant interference to other users while satisfying the requirements of users with higher traffic demands.

\begin{figure}[!t]
\centering
 		\includegraphics[width=0.8\linewidth]{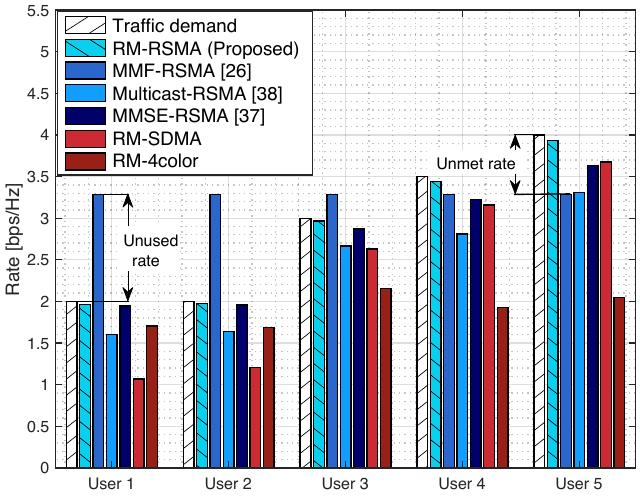}
 		\caption{{Achievable rate comparison of each user under perfect CSIT and CSIR ($\delta_{\sf{fb}}=0^{\circ}, \delta_{\sf{ce}}=0^{\circ}$) when the traffic demand $\mathbf{r}_{\sf{target}}$ is $[2, 2, 3, 3.5, 4]^{\sf{T}}$.}}
    	\label{result_1} 
\end{figure}

\begin{figure}[!t]
\centering
 		\includegraphics[width=0.8\linewidth]{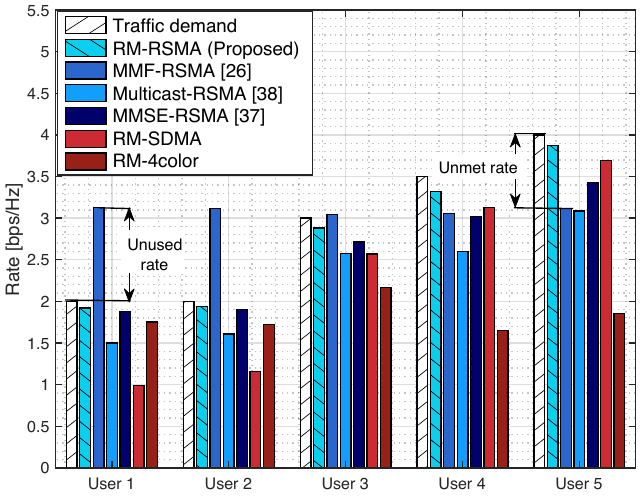}
 		\caption{{Achievable rate comparison of each user under imperfect CSIT and CSIR ($\delta_{\sf{fb}}=5^{\circ}, \delta_{\sf{ce}}=2^{\circ}$) when the traffic demand $\mathbf{r}_{\sf{target}}$ is $[2, 2, 3, 3.5, 4]^{\sf{T}}$.}}
    	\label{result_3} 
\end{figure}

{To be more specific, when the traffic demand is $\mathbf{r}_{\sf{target}}=[2, 2, 3, 3.5, 4]^{\sf{T}}$, the users in spot beams 3 and 4 have comparatively high traffic demands than the other users.} In this case, ``{\sf{RM-SDMA}}'' first tends to decrease the effect of inter-beam interference as well as intra-beam interference on spot beam 4 to satisfy the larger traffic demands over a certain level. 
For the same reason, it tends to decrease the effect of inter-beam interference on spot beam 3. 
However, by doing so, spot beams 1 and 2 suffer from severe inter-beam interference from the other beams. 
That is to say, the requirements of the users in spot beams 1 and 2, whose traffic demands are comparatively low, cannot be properly fulfilled. 
Further, the traffic demands of users in spot beams 3 and 4 are also not fully fulfilled yet.
On the contrary, ``{\sf{RM-RSMA}}'' partially fulfills the comparatively high traffic demands by using the common stream, as demonstrated in Fig. \ref{result_2} and  Fig. \ref{result_4}.
{Therefore, compared to ``{\sf{RM-SDMA}}'', the proposed ``{\sf{RM-RSMA}}'' allows users with lower traffic demands to experience relatively less interference, while effectively fulfilling the remaining unmet rates of users in spot beams 3 and 4 that are not yet satisfied.
By doing so, the entire traffic demands are satisfied uniformly.}
Notably, even though `{\sf{MMSE-RSMA}}'' is also based on the RSMA, a performance gap regarding {unused} or {unmet rate} occurs. This is because ``{\sf{RM-RSMA}}'' can more flexibly design private precoding vectors according to the traffic demands, network load status, and channel status than ``{\sf{MMSE-RSMA}}'', the normalized private precoding vectors of which are fixed to the MMSE. 
{Moreover, ``{\sf{Multicast-RSMA}}'' struggles to meet the traffic demands of individual users when the number of users within a beam is large and their demands are high, resulting in suboptimal performance compared to the proposed method.}
``{\sf{RM-4color}}'' exhibits the worst performance among the compared schemes as it cannot reuse the whole frequency band. 
Meanwhile, ``{\sf{MMF-RSMA}}'' incurs large {unused} and {unmet rates} due to its oversight of uneven traffic distributions.

\begin{figure}[!t]
\centering
 		\includegraphics[width=0.8\linewidth]{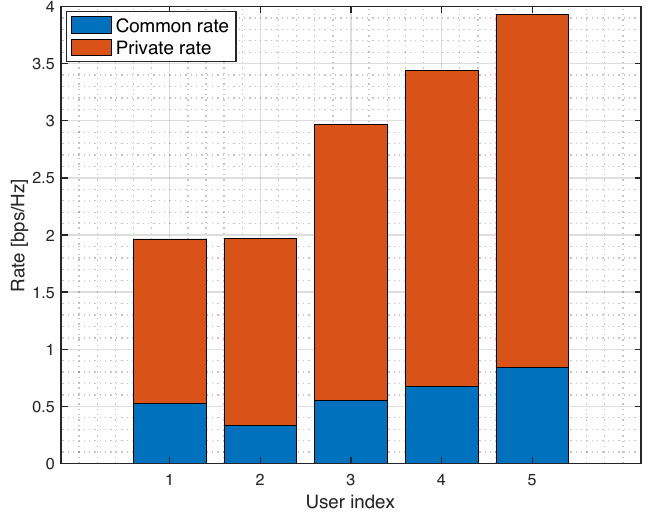}
 		\caption{{Portion of the common rate and private rate of each user for ``{\sf{RM-RSMA}}'' under perfect CSIT and CSIR ($\delta_{\sf{fb}}=0^{\circ}, \delta_{\sf{ce}}=0^{\circ}$) when the traffic demand $\mathbf{r}_{\sf{target}}$ is $[2, 2, 3, 3.5, 4]^{\sf{T}}$.}}
    	\label{result_2} 
\end{figure}

\begin{figure}[!t]
\centering
 		\includegraphics[width=0.8\linewidth]{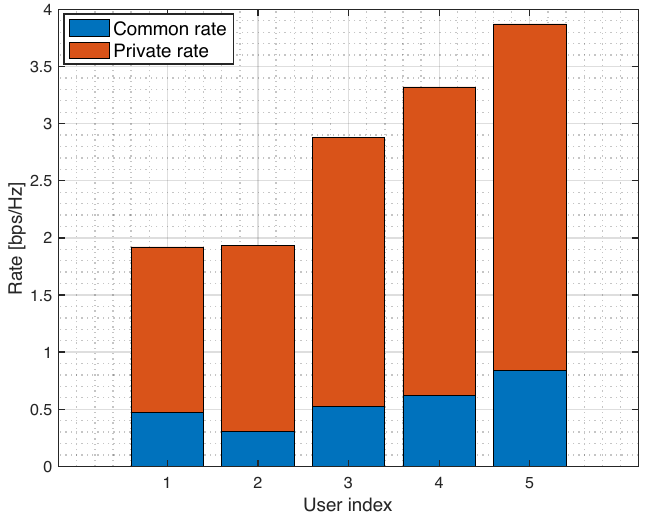}
 		\caption{{Portion of the common rate and private rate of each user for ``{\sf{RM-RSMA}}''  under imperfect CSIT and CSIR ($\delta_{\sf{fb}}=5^{\circ}, \delta_{\sf{ce}}=2^{\circ}$) when the traffic demand $\mathbf{r}_{\sf{target}}$ is $[2, 2, 3, 3.5, 4]^{\sf{T}}$.}}
    	\label{result_4} 
\end{figure}

{Thereafter, we present the traffic demand satisfaction in percentages through Fig. \ref{result_5} when the traffic demand is $\mathbf{r}_{\sf{target}} = [2, 2, 3, 3.5, 4]^{\sf{T}}$.}
The y-axis represents the ratio of robustness against the {unmet} and {unused rates}, a higher value indicating better performance. The blue (left in each scheme) and orange (right in each scheme) bars denote the average values for traffic demand satisfaction, and the black lines on each bar denote the distribution of traffic demand satisfaction. Herein, the blue bars denote the case of perfect CSIT and CSIR, namely, $\delta_{\sf{fb}} = 0^{\circ}$, $\delta_{\sf{ce}} = 0^{\circ}$, and the orange bars denote the case of imperfect CSIT and CSIR, namely, $\delta_{\sf{fb}} = 5^{\circ}$, $\delta_{\sf{ce}} = 2^{\circ}$. 
Overall, as shown in the figure, the proposed scheme shows the highest average traffic demand satisfaction as well as the lowest variance in traffic demand satisfaction compared to the other schemes. 
In other words, our proposed framework can stably satisfy the traffic demands of users by effectively reducing the {unmet} or {unused rate} compared to other schemes.

\begin{figure}[!t]
\centering
 		\includegraphics[width=0.8\linewidth]{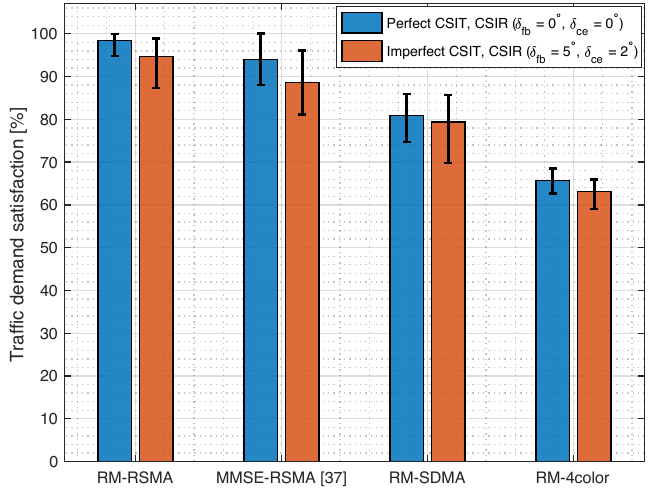}
 		\caption{{Traffic demand satisfaction comparison under both perfect CSIT and CSIR and imperfect CSIT and CSIR when the traffic demand $\mathbf{r}_{\sf{target}}$ is $[2, 2, 3, 3.5, 4]^{\sf{T}}$.}}
    	\label{result_5} 
\end{figure}

\begin{figure}[!t]
\centering
 		\includegraphics[width=0.8\linewidth]{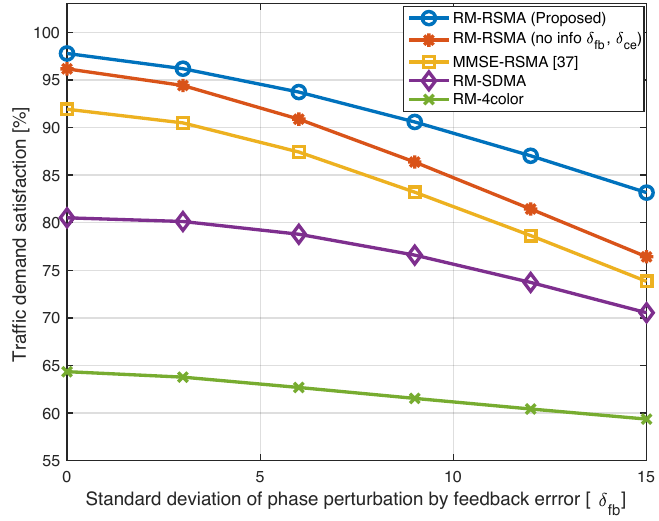}
 		\caption{{Traffic demand satisfaction comparison per $\delta_{\sf{fb}}$ when the traffic demand $\mathbf{r}_{\sf{target}}$ is $[2, 2, 3, 3.5, 4]^{\sf{T}}$. $\delta_{\sf{ce}}$ is set to  $\delta_{\sf{ce}}=2^{\circ}$.}}
    	\label{result_6} 
\end{figure}

This superiority of ``{\sf{RM-RSMA}}'' comes from the effective inter-/intra-beam interference mitigation of RSMA and suitable precoder design according to the heterogeneous traffic demands.
{To elaborate numerically, the average percentage value for the traffic demand satisfaction of ``{\sf{RM-RSMA}}'' is increased by 4.8 and 6.8 $\%$, compared to that of ``{\sf{MMSE-RSMA}}'' under perfect CSIT and CSIR and imperfect CSIT and CSIR, respectively.}
This is because as discussed previously, the proposed scheme appropriately designs both private and common precoding vectors according to the status of networks, such as traffic demands and user-overloaded. 
In contrast, ``{\sf{MMSE-RSMA}}'' designs normalized precoding vectors through the MMSE method without considering the status of networks. 
{The average value for the traffic demand satisfaction of ``{\sf{RM-RSMA}}'' is increased by 21.6 and 19.3 $\%$, compared to that of ``{\sf{RM-SDMA}}'' under perfect CSIT and CSIR and imperfect CSIT and CSIR, respectively.}
{Moreover, the average value for the traffic demand satisfaction of ``{\sf{RM-RSMA}}'' is increased by 49.9 and 50.1 $\%$, compared to that of ``{\sf{RM-4color}}'' under perfect CSIT and CSIR and imperfect CSIT and CSIR, respectively.}
For SDMA schemes, performance degradation occurs because the inter-/intra-beam interference is not effectively mitigated when compared to that in the RSMA schemes, as the {spatial degrees of freedom} is limited. 
Particularly, ``{\sf{RM-4color}}'' shows the lowest satisfaction percentage compared to other schemes since the frequency band can not be used effectively.

{ We now investigate traffic demand satisfaction as a function of CSIT accuracy, as shown in Fig. \ref{result_6}. The x-axis represents the standard deviation of phase perturbation due to channel feedback errors ($\delta_{\sf{fb}}$). Herein, we also compare the proposed ``{\sf{RM-RSMA}}'' scheme with the ``{\sf{RM-RSMA (no info $\delta_{\sf{fb}}$, $\delta_{\sf{ce}}$)}}'' scheme, which does not leverage statistical information on phase perturbations when designing the RSMA-based RM precoder.
From Fig. \ref{result_6}, it is evident that the proposed ``{\sf{RM-RSMA}}'' consistently outperforms all benchmarks, regardless of CSIT accuracy. Notably, the performance gap between ``{\sf{RM-RSMA}}'' and ``{\sf{RM-RSMA (no info $\delta_{\sf{fb}}$, $\delta_{\sf{ce}}$)}}'' becomes increasingly significant as $\delta_{\sf{fb}}$ increases. This underscores the crucial role of incorporating statistical information on phase perturbations to design a robust rate-matching precoder, allowing the proposed ``{\sf{RM-RSMA}}'' scheme to maintain superior performance even under higher levels of phase perturbation. Also, even without utilizing statistical information on phase perturbations, the ``{\sf{RM-RSMA (no info $\delta_{\sf{fb}}$, $\delta_{\sf{ce}}$)}}'' scheme still outperforms the ``{\sf{MMSE-RSMA}}'', ``{\sf{RM-SDMA}}'', and ``{\sf{RM-4color}}'' schemes across the entire range of $\delta_{\sf{fb}}$ values, further highlighting the importance of adaptability to uneven traffic demands.}

\begin{figure}[!t]
\centering
 		\includegraphics[width=0.8\linewidth]{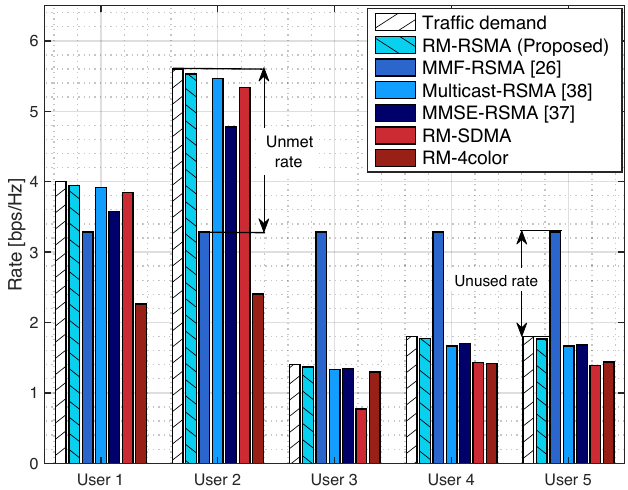}
 		\caption{{Achievable rate comparison of each user under perfect CSIT and CSIR ($\delta_{\sf{fb}}=0^{\circ}, \delta_{\sf{ce}}=0^{\circ}$) when the traffic demand $\mathbf{r}_{\sf{target}}$ is $[4, 5.6, 1.4, 1.8, 1.8]^{\sf{T}}$.}}
    	\label{result_7} 
\end{figure}

\begin{figure}[!t]
\centering
 		\includegraphics[width=0.8\linewidth]{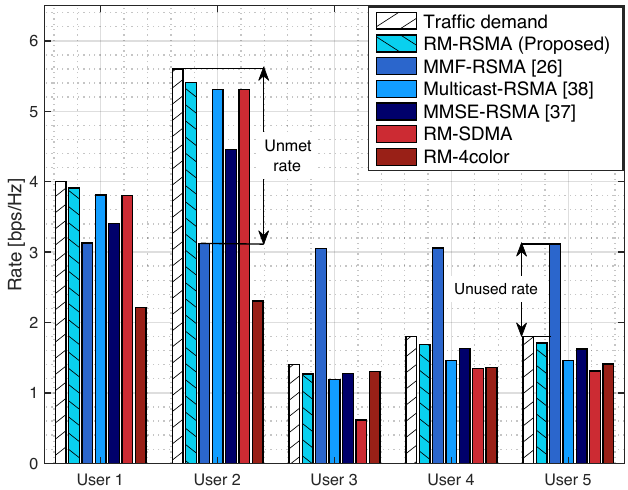}
 		\caption{{Achievable rate comparison of each user under imperfect CSIT and CSIR ($\delta_{\sf{fb}}=5^{\circ}, \delta_{\sf{ce}}=2^{\circ}$) when the traffic demand $\mathbf{r}_{\sf{target}}$ is $[4, 5.6, 1.4, 1.8, 1.8]^{\sf{T}}$.}}
    	\label{result_9} 
\end{figure}

\begin{figure}[!t]
\centering
 		\includegraphics[width=0.8\linewidth]{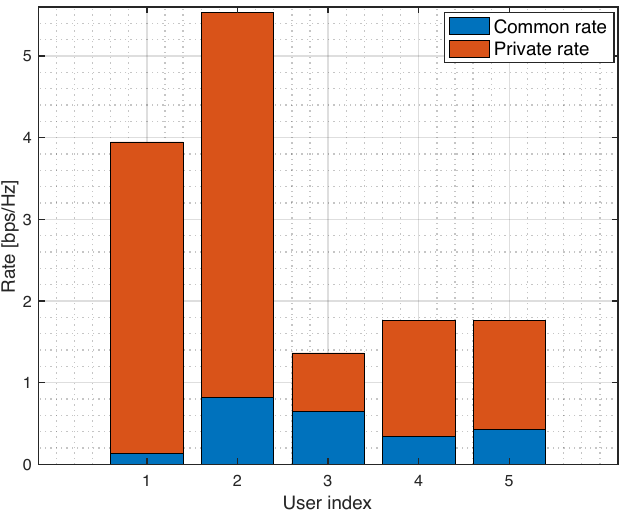}
 		\caption{{Portion of the common rate and private rate of each user for ``{\sf{RM-RSMA}}'' under perfect CSIT and CSIR ($\delta_{\sf{fb}}=0^{\circ}, \delta_{\sf{ce}}=0^{\circ}$) when the traffic demand $\mathbf{r}_{\sf{target}}$ is $[4, 5.6, 1.4, 1.8, 1.8]^{\sf{T}}$.}}
    	\label{result_8} 
\end{figure}

\begin{figure}[!t]
\centering
 		\includegraphics[width=0.8\linewidth]{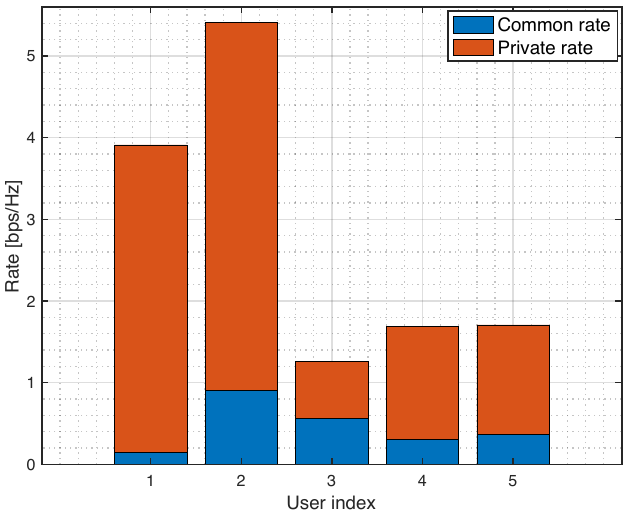}
 		\caption{{Portion of the common rate and private rate of each user for ``{\sf{RM-RSMA}}''  under imperfect CSIT and CSIR ($\delta_{\sf{fb}}=5^{\circ}, \delta_{\sf{ce}}=2^{\circ}$) when the traffic demand $\mathbf{r}_{\sf{target}}$ is $[4, 5.6, 1.4, 1.8, 1.8]^{\sf{T}}$.}}
    	\label{result_10} 
\end{figure}

{Furthermore, we provide the simulation results when the traffic demand is $\mathbf{r}_{\sf{target}}=[4, 5.6, 1.4, 1.8, 1.8]^{\sf{T}}$ bps/Hz, in which the users in spot beams 1 and 2 require comparatively larger traffic demands than others. }
First, the achievable rates of each scheme under the perfect CSIT and CSIR and the imperfect CSIT and CSIR are compared in Fig. \ref{result_7} and  Fig. \ref{result_9}, respectively. 
The results demonstrate that the proposed ``{\sf{RM-RSMA}}'' scheme exhibits better performance in the reduction of {unmet} and {unused rates}, compared to the other benchmark schemes for both cases.
To elaborate further, in this case, ``{\sf{RM-SDMA}}'' tends to decrease the effect of the inter-beam interference from other beams to spot beams 1 and 2 to satisfy the traffic demands of users 1 and 2 over a certain level at first.
However, by doing so, the effects of intra-beam and inter-beam interferences on spot beam 4 increase, which in turn, results in the traffic demands of the users in spot beam 4 being unsatisfied. 
For the same reason, the effect of inter-beam interference on spot beam 3 increases, because of which the traffic demand of user 3 cannot be properly satisfied. 
Furthermore, it can be observed that the traffic demands of users 1 and 2 are not completely fulfilled yet. 
These results are also caused by the limited {spatial degrees of freedom} of SDMA in the user-overloaded scenario.
On the other hand, ``{\sf{RM-RSMA}}'' partially fulfills the traffic demands of users by using the common stream as demonstrated in Fig. \ref{result_8} and Fig. \ref{result_10}. 
Therefore, through the common stream, the intra-beam interference within spot beam 4 is effectively mitigated, and the effect of inter-beam interference from the other beams to spot beams 3 and 4 is also significantly decreased. Additionally, the remaining {unmet rates} of users 1 and 2, which are not satisfied yet in ``{\sf{RM-SDMA}}'' is also fulfilled through the common stream.
``{\sf{MMSE-RSMA}}'' exhibits a larger {unused} or {unmet rate} than ``{\sf{RM-RSMA}}'' since it cannot flexibly design private precoding vectors according to the traffic demands. 
{Particularly, the performance gap between ``{\sf{RM-RSMA}}'' and ``{\sf{MMSE-RSMA}}'' increases, compared to the case when the traffic demand is
$\mathbf{r}_{\sf{target}} = [2,2,3,3.5,4]^{\sf{T}}$.} 
This is because ``{\sf{MMSE-RSMA}}'' cannot fulfill a high value of the traffic demands as shown in Fig. \ref{result_7} and  Fig. \ref{result_9}.
{For ``{\sf{Multicast-RSMA}}'', since the user traffic demands in beams with a large number of users are comparatively lower than those in the case of $\mathbf{r}_{\sf{target}} = [2,2,3,3.5,4]^{\sf{T}}$, it exhibits better performance in reducing unmet rates compared to the $\mathbf{r}_{\sf{target}} = [2,2,3,3.5,4]^{\sf{T}}$ case.}
Moreover, as ``{\sf{RM-4color}}'' cannot effectively use the frequency band, it exhibits the worst performance among the compared schemes.
In `{\sf{MMF-RSMA}}'', due to the overlook of heterogeneous traffic distributions, it results in large {unused} and {unmet rates}.

\begin{figure}[!t]
\centering
 		\includegraphics[width=0.8\linewidth]{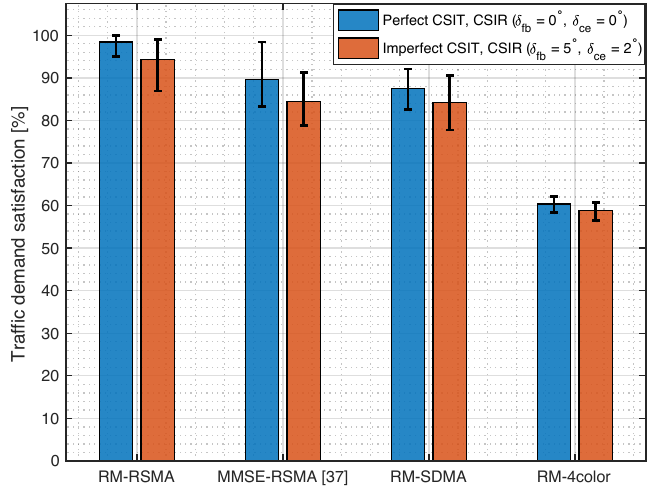}
 		\caption{{Traffic demand satisfaction comparison under both perfect CSIT and CSIR and imperfect CSIT and CSIR when the traffic demand $\mathbf{r}_{\sf{target}}$ is $[4, 5.6, 1.4, 1.8, 1.8]^{\sf{T}}$.}}
    	\label{result_11} 
\end{figure}

\begin{figure}[!t]
\centering
 		\includegraphics[width=0.8\linewidth]{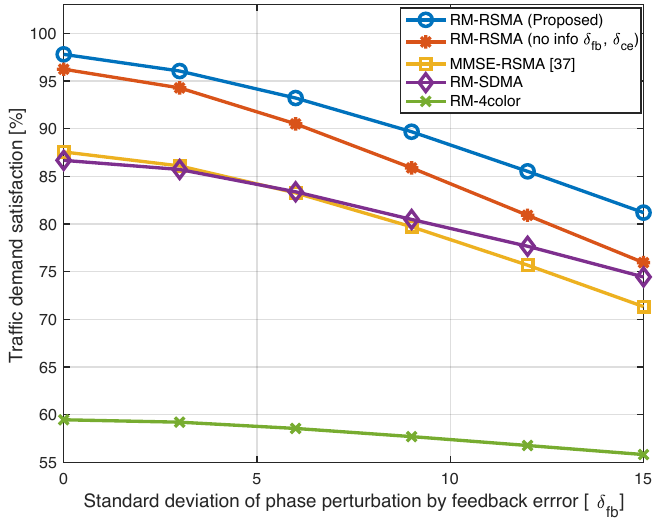}
 		\caption{{Traffic demand satisfaction comparison per $\delta_{\sf{fb}}$ when the traffic demand $\mathbf{r}_{\sf{target}}$ is $[4, 5.6, 1.4, 1.8, 1.8]^{\sf{T}}$. $\delta_{\sf{ce}}$ is set to  $\delta_{\sf{ce}}=2^{\circ}$.}}
    	\label{result_12} 
\end{figure}

{The traffic demand satisfaction is shown in Fig. \ref{result_11} when the traffic demand is $\mathbf{r}_{\sf{target}} = [4, 5.6, 1.4, 1.8, 1.8]^{\sf{T}}$.} Similar to Fig. \ref{result_5}, the blue (perfect CSIT and CSIR, i.e., $\delta_{\sf{fb}} = 0^{\circ}$, $\delta_{\sf{ce}} = 0^{\circ}$) and orange bars (imperfect CSIT and CSIR, i.e., $\delta_{\sf{fb}} = 5^{\circ}$, $\delta_{\sf{ce}} = 2^{\circ}$) denote the average values for traffic demand satisfaction, and the black lines on each bar denote the distribution of traffic demand satisfaction. 
Similar to previous results, ``{\sf{RM-RSMA}}'' stably satisfies the traffic demands of users compared to other schemes owing to its robust inter-/intra-beam interference management and flexible precoder design according to the traffic demands.
{Specifically, the average percentage value for the traffic demand satisfaction of ``{\sf{RM-RSMA}}'' is increased by 9.9 and 11.7 $\%$, compared to that of ``{\sf{MMSE-RSMA}}'' under perfect CSIT and CSIR and imperfect CSIT and CSIR, respectively.}
{Furthermore, the average value for traffic demand satisfaction of ``{\sf{RM-RSMA}}'' is increased by 12.6 and 11.8 $\%$, compared to that of ``{\sf{RM-SDMA}}'' under perfect CSIT and CSIR and imperfect CSIT and CSIR, respectively.} {The average value for traffic demand satisfaction of ``{\sf{RM-RSMA}}'' is increased by 63.1 and 60.2 $\%$, compared to that of ``{\sf{RM-4color}}'' under perfect CSIT and CSIR and imperfect CSIT and CSIR, respectively.}

{ The traffic demand satisfaction per CSIT accuracy, for a target traffic demand of $\mathbf{r}_{\sf{target}} = [4, 5.6, 1.4, 1.8, 1.8]^{\sf{T}}$, is depicted in Fig. \ref{result_12}. Similar to the trends observed in Fig. \ref{result_6}, the proposed ``{\sf{RM-RSMA}}'' consistently achieves higher traffic demand satisfaction compared to all benchmark schemes across the entire range of $\delta_{\sf{fb}}$ values. Furthermore, as $\delta_{\sf{fb}}$ increases, the performance gap between the proposed scheme and ``{\sf{RM-RSMA (no info $\delta_{\sf{fb}}$, $\delta_{\sf{ce}}$)}}'' becomes more significant. Also, even without leveraging statistical information on phase perturbations, the ``{\sf{RM-RSMA (no info $\delta_{\sf{fb}}$, $\delta_{\sf{ce}}$)}}'' scheme still outperforms other benchmarks across all $\delta_{\sf{fb}}$ values.}


Fig. \ref{result_13} depicts the RM performance of the proposed ``{\sf{RM-RSMA}}'' framework per the regularization parameter $\eta$ of the cases in which the objective functions are set as the L1- and L2-norms, respectively. In other words, the objective functions in $\mathscr{P}_1$ are set as follows:
\begin{align}
\label{L1norm}
  &\min_{\mathbf{P}, \mathbf{c}} \,\, {\sum_{j=1}^{K} \vert R_{{\sf target}{,j}}-R_{j} \vert } = \min_{\mathbf{P}, \mathbf{c}} \,\, {\Vert \mathbf{r}_{{\sf target}}-\mathbf{r} \Vert_{1} },\\
\label{L2norm}
  &\min_{\mathbf{P}, \mathbf{c}} \,\, {\sum_{j=1}^{K} \vert R_{{\sf target}{,j}}-R_{j} \vert ^{2}}  = \min_{\mathbf{P}, \mathbf{c}} \,\, {\Vert \mathbf{r}_{{\sf target}}-\mathbf{r} \Vert_{2}^{2} },
\end{align}
where $\mathbf{r} = [R_{1},\cdots,R_{K}]^{\sf{T}}$. 
The y-axis in Fig. \ref{result_13} represents the average value of the difference between the offered rates and traffic demands. { The larger the value, the more significant the extent to which the offered rates fall short of the demands.

\begin{figure}[!t]
\centering
 		\includegraphics[width=0.78\linewidth]{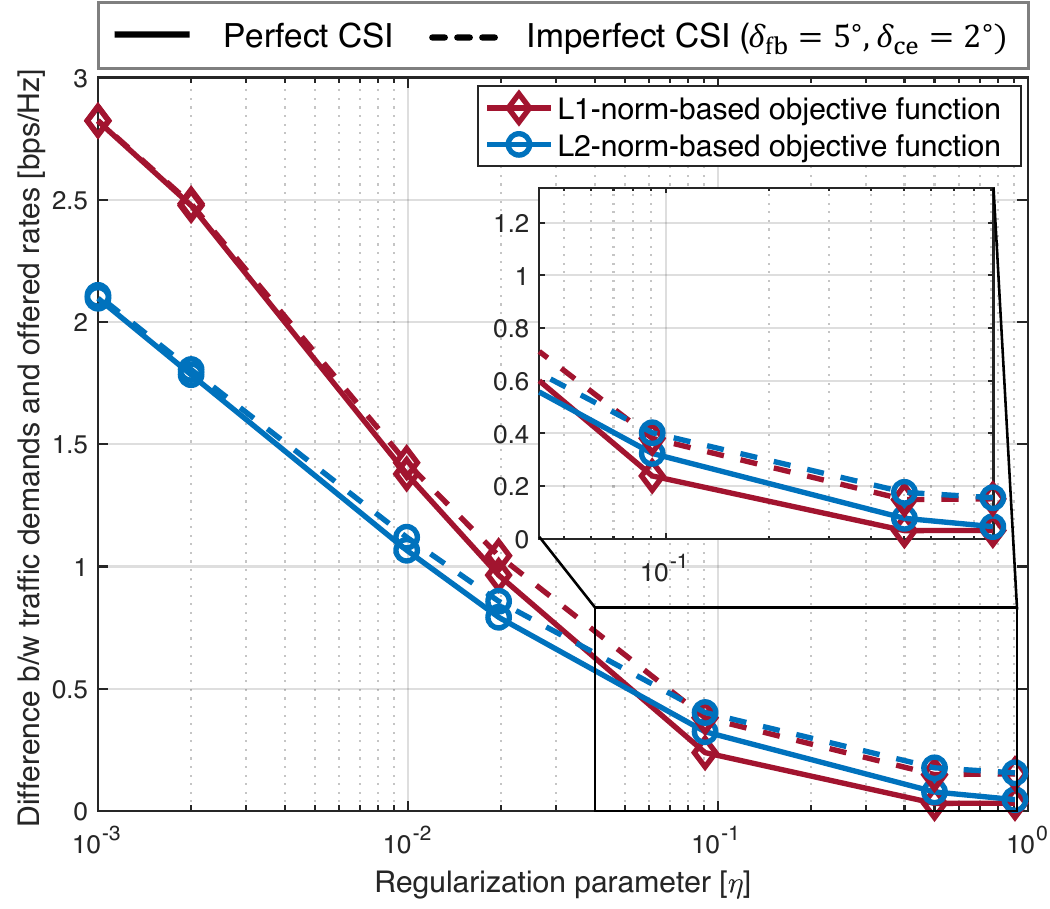}
 		\caption{{Rate-matching performance comparison of the proposed framework when employing the L1-norm and L2-norm-based objective functions when the traffic demand $\mathbf{r}_{\sf{target}}$ is $[2, 2, 3, 3.5, 4]^{\sf{T}}$.}}
    	\label{result_13} 
\end{figure}

\begin{figure}[!t]
\centering
 		\includegraphics[width=0.78\linewidth]{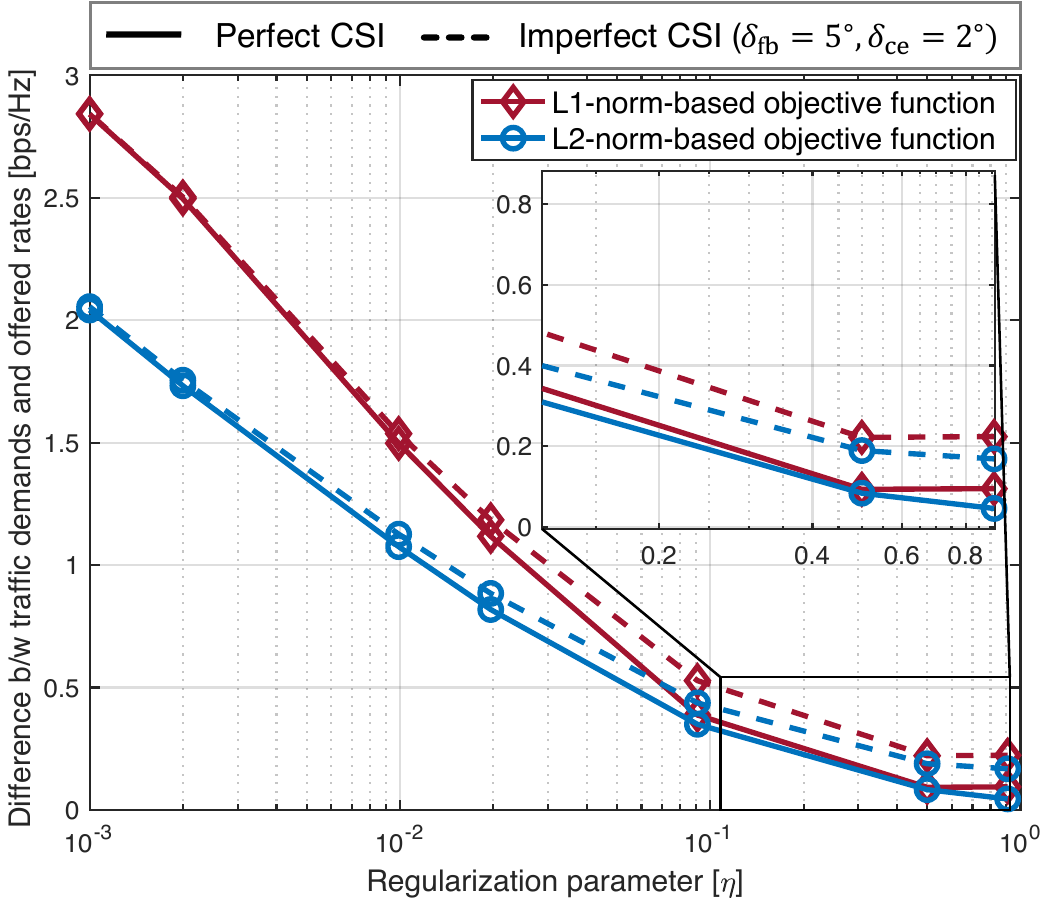}
 		\caption{{Rate-matching performance comparison of the proposed framework when employing the L1-norm and L2-norm-based objective functions when the traffic demand $\mathbf{r}_{\sf{target}}$ is $[4, 5.6, 1.4, 1.8, 1.8]^{\sf{T}}$.}}
    	\label{result_14} 
\end{figure}

We first compare the RM performance of L1-norm and
L2-norm-based objective functions when traffic demand is set to $\mathbf{r}_{\sf{target}}=[2, 2, 3, 3.5, 4]^{\sf{T}}$ bps/Hz.} In general, as $\eta$ increases, the difference between the offered rates and traffic demands decreases since the satellite prioritizes RM rather than transmission power minimization.
The solid line represents the results under the perfect CSIT and CSIR, that is, $\delta_{\sf{fb}} = 0^{\circ}$, $\delta_{\sf{ce}} = 0^{\circ}$, and the dashed line represents the results under the imperfect CSIT and CSIR, that is, $\delta_{\sf{fb}} = 5^{\circ}$, $\delta_{\sf{ce}} = 2^{\circ}$. In both the perfect CSIT and CSIR and the imperfect CSIT and CSIR cases, when $\eta$ is comparatively small, the L2-norm-based objective function performs better than the L1-norm-based objective function in terms of RM.
On the other hand, when $\eta$ becomes comparatively large, the L1-norm-based objective function shows better fulfillment of the traffic demands than the L2-norm-based objective function.
{This is because the L1-norm gives a big weight on small residuals and less weight on large residuals, whereas the L2-norm gives a big weight on large residuals and less weight on small residuals \cite{boyd2004convex}.}

{ We then compare the RM performance of the L1-norm and L2-norm-based objective functions when the traffic demand is set to $\mathbf{r}_{\sf{target}} = [4, 5.6, 1.4, 1.8, 1.8]^{\sf{T}}$ bps/Hz, through Fig. \ref{result_14}. The results show that in scenarios, where traffic demands among users are more uneven, the L2-norm-based framework outperforms the L1-norm-based one across all $\eta$ values, in both the perfect CSIT and CSIR and the imperfect CSIT and CSIR cases. This is because the L2-norm better accommodates disproportionately high traffic demands compared to the L1-norm. By squaring each residual, the L2-norm places greater emphasis on larger residuals, making it more sensitive to significantly higher traffic demands. On the other hand, the L1-norm treats residuals linearly, applying equal weight to all residuals regardless of the traffic demands, which makes it less effective at handling disproportionately high demands.}

\section{Conclusion}
{In this paper, an RSMA-based power-efficient RM framework has been studied for multibeam LEO SATCOM under phase perturbations from both channel estimation and feedback errors.} 
A robust precoder design problem to minimize the difference between the traffic demands and offered rates has been formulated. 
To make this non-convex problem more tractable, we have proposed an SCA method-based iterative algorithm that solves it efficiently. 
 Simulation results have shown that the proposed framework can flexibly design the precoder according to traffic demands. 
Also, the inter-/intra-beam interference can be effectively managed by employing rate splitting approach even with the limited spatial dimension available at satellites. 
The two-fold robustness of the proposed framework over phase perturbation and user-overload scenarios has been demonstrated. 
{Owing to the superior demand-matching capability of the proposed framework, compared with that of the various benchmark schemes, we expect our work to be useful in satisfying the heterogeneous user requirements in upcoming multibeam LEO SATCOM systems.} 

{ Future directions include considering satellite availability based on the geometric orbit model for reliable and continuous satellite-to-user links, as discussed in \cite{tang2021computation, yoo2024cache}. This can be accomplished through coordination between multiple LEO satellites, supported by the exchange of essential information such as user data, ephemeris data, and handover tables.}


\section{Appendix: Proof of (\ref{Derv_4}) and (\ref{Derv_5})}

By using ${\mathbf{e}_{k}^{\sf{fb}}} = [e^{j\theta_{k,1}^{\sf{fb}}}, \cdots, e^{j\theta_{k, N_{{\sf{t}}}}^{\sf{fb}}}]^{\sf{T}}$, $\mathbb{E}[{\mathbf{e}_{k}^{\sf{fb}}}{\mathbf{e}_{k}^{\sf{fb}}}^{\sf{H}}]$ can be represented as
\begin{align} \label{APPENB1}
   \mathbb{E}[{\mathbf{e}_{k}^{\sf{fb}}}{\mathbf{e}_{k}^{\sf{fb}}}^{\sf{H}}] = \mathbb{E}\left[
   \left[\begin{smallmatrix}
   1&\cdots&e^{j(\theta_{k, 1}^{\sf{fb}}-\theta_{k, N_{{\sf{t}}}}^{\sf{fb}})} \\ 
   \vdots&\ddots&\vdots\\
   e^{j(\theta_{k, N_{{\sf{t}}}}^{\sf{fb}}-\theta_{k, 1}^{\sf{fb}})}&\cdots&1
   \end{smallmatrix}\right]
   \right].
\end{align}
As each phase element of ${\mathbf{e}_{k}^{\sf{fb}}}$ follows i.i.d. such that $\mathbf{\theta}_{k}^{\sf{fb}}\sim\mathcal{N}{(0,\delta_{\sf{fb}}^{2} \mathbf{I}_{N_{\sf{t}}})}$, the expectation of the $mn$-th element $\mathbb{E}[e^{j(\theta_{k, m}^{\sf{fb}}-\theta_{k, n}^{\sf{fb}})}]$ is decomposed as $\mathbb{E}[e^{j(\theta_{k, m}^{\sf{fb}})}]\mathbb{E}[e^{-j(\theta_{k, n}^{\sf{fb}})}]$. Herein, $\mathbb{E}[e^{j(\theta_{k, m}^{\sf{fb}})}]$ is rewritten as follows:
\begin{align} \label{APPENB2}
  \mathbb{E}[e^{j(\theta_{k, m}^{\sf{fb}})}] &= \int_{-\infty}^{\infty}{\frac{e^{j(\theta_{k, m}^{\sf{fb}})}}{\sqrt{2\pi\delta_{\sf{fb}}^{2}}}}{e^{-\frac{(\theta_{k, m}^{\sf{fb}})^2}{2\delta_{\sf{fb}}^{2}}}}{d\theta_{k, m}^{\sf{fb}}}  \\ 
  &= \int_{-\infty}^{\infty}{\frac{1}{\sqrt{2\pi\delta_{\sf{fb}}^{2}}}}{e^{-\frac{((\theta_{k, m}^{\sf{fb}})^2-2j\delta_{\sf{fb}}^{2}\theta_{k, m}^{\sf{fb}})}{2\delta_{\sf{fb}}^{2}}}}{d\theta_{k, m}^{\sf{fb}}} \nonumber \\
  &= \int_{-\infty}^{\infty}{\frac{1}{\sqrt{2\pi\delta_{\sf{fb}}^{2}}}}{e^{-\frac{(\theta_{k, m}^{\sf{fb}}-j\delta_{\sf{fb}}^{2})^2}{2\delta_{\sf{fb}}^{2}}}}{e^{-\frac{\delta_{\sf{fb}}^{2}}{2}}}{d\theta_{k, m}^{\sf{fb}}} = {e^{-\frac{\delta_{\sf{fb}}^{2}}{2}}},\nonumber
\end{align}
and $\mathbb{E}[e^{-j(\theta_{k, n}^{\sf{fb}})}]$ can also be rewritten as $\mathbb{E}[e^{-j(\theta_{k, n}^{\sf{fb}})}]=e^{-\frac{\delta_{\sf{fb}}^{2}}{2}}$ through the same procedure. 
Therefore, $\mathbb{E}[e^{j(\theta_{k, m}^{\sf{fb}}-\theta_{k, n}^{\sf{fb}})}]$ becomes $\mathbb{E}[e^{j(\theta_{k, m}^{\sf{fb}}-\theta_{k, n}^{\sf{fb}})}]=\mathbb{E}[e^{j(\theta_{k, m}^{\sf{fb}})}]\mathbb{E}[e^{-j(\theta_{k, n}^{\sf{fb}})}] = e^{-\delta_{\sf{fb}}^{2}}$. In this manner, all non-diagonal elements of (\ref{APPENB1}) can be derived as $e^{-\delta_{\sf{fb}}^{2}}$, whereas all diagonal elements become $1$. 
In conclusion, (\ref{APPENB1}) can be rewritten as follows:
\begin{align} \label{APPENB3}
   \mathbb{E}[{\mathbf{e}_{k}^{\sf{fb}}}{\mathbf{e}_{k}^{\sf{fb}}}^{\sf{H}}] = 
   \begin{bmatrix}
   1&\cdots&e^{-\delta_{\sf{fb}}^{2}} \\ 
   \vdots&\ddots&\vdots\\
   e^{-\delta_{\sf{fb}}^{2}}&\cdots&1
   \end{bmatrix},
\end{align}
and it can be decomposed as follows:
\begin{align} \label{APPENB4}
   \begin{bmatrix}
   1&\cdots&e^{-\delta_{\sf{fb}}^{2}} \\ 
   \vdots&\ddots&\vdots\\
   e^{-\delta_{\sf{fb}}^{2}}&\cdots&1
   \end{bmatrix} 
   = e^{-\delta_{\sf{fb}}^{2}}\mathbf{1}_{N_{{\sf{t}}}} + (1-e^{-\delta_{\sf{fb}}^{2}})\mathbf{I}_{N_{{\sf{t}}}},
\end{align}
which is a PSD matrix because $e^{-\delta_{\sf{fb}}^{2}} \leq 1$. 

Subsequently, using ${\mathbf{e}_{k}^{\sf{ce}}} = [e^{j\theta_{k,1}^{\sf{ce}}}, \cdots, e^{j\theta_{k, N_{{\sf{t}}}}^{\sf{ce}}}]^{\sf{T}}$, $\mathbb{E}[({\mathbf{e}_{k}^{\sf{fb}}}-\mathbf{1})({\mathbf{e}_{k}^{\sf{fb}}}-\mathbf{1})^{\sf{H}}]$ can be represented as follows:
\begin{align} \label{APPENB5}
   &\mathbb{E}[({\mathbf{e}_{k}^{\sf{ce}}} - \mathbf{1})({\mathbf{e}_{k}^{\sf{ce}}} - \mathbf{1})^{\sf{H}}] =
    \\ 
   & \mathbb{E}\left[ \left[
   \begin{smallmatrix}
    2-e^{j(\theta_{k, 1}^{\sf{ce}})}-e^{-j(\theta_{k, 1}^{\sf{ce}})} &  \cdots &   (e^{j(\theta_{k, 1}^{\sf{ce}}-\theta_{k, N_{{\sf{t}}}}^{\sf{ce}})}-e^{j(\theta_{k, 1}^{\sf{ce}})} \\
   && -e^{-j(\theta_{k, N_{{\sf{t}}}}^{\sf{ce}})} + 1) \\ 
   \vdots&\ddots&\vdots\\
   (e^{j(\theta_{k, N_{{\sf{t}}}}^{\sf{ce}}-\theta_{k, 1}^{\sf{ce}})}-e^{j(\theta_{k, N_{{\sf{t}}}}^{\sf{ce}})} && \\
   -e^{-j(\theta_{k, 1}^{\sf{ce}})} + 1) &  \cdots &  2-e^{j(\theta_{k, N_{{\sf{t}}}}^{\sf{ce}})}-e^{-j(\theta_{k, N_{{\sf{t}}}}^{\sf{ce}})}
   \end{smallmatrix}
   \right] \right].\nonumber
\end{align}
Herein, each phase element of $\mathbf{e}_k^{\sf{ce}}$ also follows i.i.d such that $\mathbf{\theta}_{k}^{\sf{ce}}\sim\mathcal{N}{(0,\delta_{\sf{ce}}^{2} \mathbf{I}_{N_{\sf{t}}})}$.
Therefore, through the same procedure as that in (\ref{APPENB2}), all diagonal elements of (\ref{APPENB5}) can be derived as $2-2e^{-\frac{\delta_{\sf{ce}}^{2}}{2}}$, and all non-diagonal elements of (\ref{APPENB5}) can be derived as $e^{-\delta_{\sf{ce}}^{2}}-2e^{-\frac{\delta_{\sf{ce}}^{2}}{2}}+1$. 
Thus, (\ref{APPENB5}) can be rewritten as 
\begin{align} \label{APPENB6}
   &\mathbb{E}[({\mathbf{e}_{k}^{\sf{ce}}}-\mathbf{1})({\mathbf{e}_{k}^{\sf{ce}}}-\mathbf{1})^{\sf{H}}] =
   \nonumber \\
   &\left[ \begin{smallmatrix}
   2-2e^{-\frac{\delta_{\sf{ce}}^{2}}{2}}&\cdots& e^{-\delta_{\sf{ce}}^{2}}-2e^{-\frac{\delta_{\sf{ce}}^{2}}{2}}+1 \\ \vdots&\ddots&\vdots\\
   e^{-\delta_{\sf{ce}}^{2}}-2e^{-\frac{\delta_{\sf{ce}}^{2}}{2}}+1&\cdots&2-2e^{-\frac{\delta_{\sf{ce}}^{2}}{2}}
   \end{smallmatrix} \right] = \nonumber \\
   &\left[ \begin{smallmatrix}
   (1-e^{-\frac{\delta_{\sf{ce}}^{2}}{2}})^2 + 1 - e^{-\delta_{\sf{ce}}^{2}}&\cdots&(1-e^{-\frac{\delta_{\sf{ce}}^{2}}{2}})^2 \\ 
   \vdots&\ddots&\vdots\\
   (1-e^{-\frac{\delta_{\sf{ce}}^{2}}{2}})^2&\cdots&(1-e^{-\frac{\delta_{\sf{ce}}^{2}}{2}})^2 + 1 - e^{-\delta_{\sf{ce}}^{2}}
   \end{smallmatrix} \right],
\end{align}
and it can be decomposed as follows:
\begin{align} \label{APPENB7}
   &\left[ \begin{smallmatrix}
   (1-e^{-\frac{\delta_{\sf{ce}}^{2}}{2}})^2 + 1 - e^{-\delta_{\sf{ce}}^{2}}&\cdots&(1-e^{-\frac{\delta_{\sf{ce}}^{2}}{2}})^2 \\ 
   \vdots&\ddots&\vdots\\
   (1-e^{-\frac{\delta_{\sf{ce}}^{2}}{2}})^2&\cdots&(1-e^{-\frac{\delta_{\sf{ce}}^{2}}{2}})^2 + 1 - e^{-\delta_{\sf{ce}}^{2}}
   \end{smallmatrix} \right]
   \nonumber \\
   &= (1-e^{-\frac{\delta_{\sf{ce}}^{2}}{2}})^2\mathbf{1}_{N_{\sf{t}}} + (1 - e^{-\delta_{\sf{ce}}^{2}}) \mathbf{I}_{N_{\sf{t}}},
\end{align}
which is also a PSD matrix, because $e^{-\frac{\delta_{\sf{ce}}^{2}}{2}} \leq 1$ and $e^{-\delta_{\sf{ce}}^{2}} \leq 1$. 
Further, as the Hadamard product of the PSD matrices is a PSD matrix, $\hat{\mathbf{h}}_{k} \hat{\mathbf{h}}_{k}^{\sf{H}} \odot
    [e^{-\delta_{\sf{fb}}^{2}}\mathbf{1}_{N_{{\sf{t}}}} + (1-e^{-\delta_{\sf{fb}}^{2}})\mathbf{I}_{N_{{\sf{t}}}}]$ and $\hat{\mathbf{h}}_{k} \hat{\mathbf{h}}_{k}^{\sf{H}} \odot
    [(1-e^{-\frac{\delta_{\sf{ce}}^{2}}{2}})^{2}\mathbf{1}_{N_{{\sf{t}}}} + (1-e^{-\delta_{\sf{ce}}^{2}})\mathbf{I}_{N_{{\sf{t}}}}]$ are also PSD matrices.

\bibliographystyle{IEEEtran}
\bibliography{jhseong_reff}

\begin{thebibliography}{10}
\providecommand{\url}[1]{#1}
\csname url@samestyle\endcsname
\providecommand{\newblock}{\relax}
\providecommand{\bibinfo}[2]{#2}
\providecommand{\BIBentrySTDinterwordspacing}{\spaceskip=0pt\relax}
\providecommand{\BIBentryALTinterwordstretchfactor}{4}
\providecommand{\BIBentryALTinterwordspacing}{\spaceskip=\fontdimen2\font plus
\BIBentryALTinterwordstretchfactor\fontdimen3\font minus
  \fontdimen4\font\relax}
\providecommand{\BIBforeignlanguage}[2]{{%
\expandafter\ifx\csname l@#1\endcsname\relax
\typeout{** WARNING: IEEEtran.bst: No hyphenation pattern has been}%
\typeout{** loaded for the language `#1'. Using the pattern for}%
\typeout{** the default language instead.}%
\else
\language=\csname l@#1\endcsname
\fi
#2}}
\providecommand{\BIBdecl}{\relax}
\BIBdecl

\bibitem{seong2023robust}
J.~Seong \emph{et~al.}, ``Robust rate-matching framework for multibeam
  satellite communications with phase perturbations,'' in \emph{Proc. 2023 IEEE
  Globecom Workshops (GC Workshops)}, 2023, pp. 280--285.

\bibitem{zhong2018traffic}
Y.~Zhong, X.~Ge, H.~H. Yang, T.~Han, and Q.~Li, ``Traffic matching in {5G}
  ultra-dense networks,'' \emph{IEEE Commun. Mag.}, vol.~56, no.~8, pp.
  100--105, 2018.

\bibitem{perez2019signal}
A.~I. Perez-Neira, M.~A. Vazquez, M.~B. Shankar, S.~Maleki, and S.~Chatzinotas,
  ``Signal processing for high-throughput satellites: {C}hallenges in new
  interference-limited scenarios,'' \emph{IEEE Signal Process. Mag.}, vol.~36,
  no.~4, pp. 112--131, 2019.

\bibitem{10559954}
M.~Toka \emph{et~al.}, ``Ris-empowered {LEO} satellite networks for {6G}:
  Promising usage scenarios and future directions,'' \emph{IEEE Commun. Mag.},
  vol.~62, no.~11, pp. 128--135, 2024.

\bibitem{lin2021secrecy}
Z.~Lin, M.~Lin, B.~Champagne, W.-P. Zhu, and N.~Al-Dhahir, ``Secrecy-energy
  efficient hybrid beamforming for satellite-terrestrial integrated networks,''
  \emph{IEEE Trans. Commun.}, vol.~69, no.~9, pp. 6345--6360, 2021.

\bibitem{lin2022refracting}
Z.~Lin, H.~Niu, K.~An, Y.~Wang, G.~Zheng, S.~Chatzinotas, and Y.~Hu,
  ``Refracting {RIS}-aided hybrid satellite-terrestrial relay networks: Joint
  beamforming design and optimization,'' \emph{IEEE Trans. Aerosp. Electron.
  Syst.}, vol.~58, no.~4, pp. 3717--3724, 2022.

\bibitem{lin2022slnr}
Z.~Lin, K.~An, H.~Niu, Y.~Hu, S.~Chatzinotas, G.~Zheng, and J.~Wang,
  ``{SLNR}-based secure energy efficient beamforming in multibeam satellite
  systems,'' \emph{IEEE Trans. Aerosp. Electron. Syst.}, vol.~59, no.~2, pp.
  2085--2088, 2022.

\bibitem{gharanjik2015robust}
A.~Gharanjik, M.~B. Shankar, P.-D. Arapoglou, M.~Bengtsson, and B.~Ottersten,
  ``Robust precoding design for multibeam downlink satellite channel with phase
  uncertainty,'' in \emph{Proc. IEEE Int. Conf. Acoust., Speech Signal Process.
  (ICASSP)}, 2015, pp. 3083--3087.

\bibitem{zhang2019robust}
X.~Zhang, J.~Wang, C.~Jiang, C.~Yan, Y.~Ren, and L.~Hanzo, ``Robust beamforming
  for multibeam satellite communication in the face of phase perturbations,''
  \emph{IEEE Trans. Veh. Technol.}, vol.~68, no.~3, pp. 3043--3047, 2019.

\bibitem{vazquez2016precoding}
M.~{\'A}. V{\'a}zquez \emph{et~al.}, ``Precoding in multibeam satellite
  communications: {P}resent and future challenges,'' \emph{IEEE Wireless
  Commun.}, vol.~23, no.~6, pp. 88--95, 2016.

\bibitem{wang2021resource}
W.~Wang \emph{et~al.}, ``Resource efficiency optimization for robust
  beamforming in multi-beam satellite communications,'' \emph{IEEE Trans. Veh.
  Technol.}, vol.~70, no.~7, pp. 6958--6968, 2021.

\bibitem{clerckx2016rate}
B.~Clerckx, H.~Joudeh, C.~Hao, M.~Dai, and B.~Rassouli, ``Rate splitting for
  {MIMO} wireless networks: {A} promising {PHY}-layer strategy for {LTE}
  evolution,'' \emph{IEEE Commun. Mag.}, vol.~54, no.~5, pp. 98--105, 2016.

\bibitem{mao2018energy}
Y.~Mao, B.~Clerckx, and V.~O. Li, ``Energy efficiency of rate-splitting
  multiple access, and performance benefits over {SDMA} and {NOMA},'' in
  \emph{Proc. IEEE 15th Int. Symp. Wireless Commun. Sys.}, 2018, pp. 1--5.

\bibitem{mao2018rate}
Y.~Mao, B.~Clerckx, and V.~O. Li, ``Rate-splitting multiple access for downlink
  communication systems: Bridging, generalizing, and outperforming {SDMA} and
  {NOMA},'' \emph{EURASIP J. Wirel. Commun. Netw.}, vol. 2018, no.~1, pp.
  1--54, 2018.

\bibitem{clerckx2019rate}
B.~Clerckx, Y.~Mao, R.~Schober, and H.~V. Poor, ``Rate-splitting unifying
  {SDMA}, {OMA}, {NOMA}, and multicasting in {MISO} broadcast channel: A simple
  two-user rate analysis,'' \emph{IEEE Wireless Commun. Lett.}, vol.~9, no.~3,
  pp. 349--353, 2019.

\bibitem{park2023rate}
J.~Park \emph{et~al.}, ``Rate-splitting multiple access for {6G} networks: Ten
  promising scenarios and applications,'' \emph{IEEE Network}, vol.~38, no.~3,
  pp. 128--136, 2024.

\bibitem{yin2020rate}
L.~Yin and B.~Clerckx, ``Rate-splitting multiple access for multibeam satellite
  communications,'' in \emph{Proc. IEEE Int. Conf. Commun. Workshops (ICC
  Workshops)}, 2020, pp. 1--6.

\bibitem{yin2020rate_J}
L.~Yin and B.~Clerckx, ``Rate-splitting multiple access for multigroup
  multicast and multibeam satellite systems,'' \emph{IEEE Trans. Commun.},
  vol.~69, no.~2, pp. 976--990, 2020.

\bibitem{yin2021ratephy}
L.~Yin, O.~Dizdar, and B.~Clerckx, ``Rate-splitting multiple access for
  multigroup multicast cellular and satellite communications: {PHY} layer
  design and link-level simulations,'' in \emph{Proc. IEEE Int. Conf. Commun.
  Workshops (ICC Workshops)}, 2021, pp. 1--6.

\bibitem{si2022rate}
Z.~W. Si, L.~Yin, and B.~Clerckx, ``Rate-splitting multiple access for
  multigateway multibeam satellite systems with feeder link interference,''
  \emph{IEEE Trans. Commun.}, vol.~70, no.~3, pp. 2147--2162, 2022.

\bibitem{liu2023energy}
J.~Liu, Y.~L. Guan, Y.~Ge, L.~Yin, and B.~Clerckx, ``Energy efficiency of
  rate-splitting multiple access for multibeam satellite system,'' arXiv
  preprint arXiv:2303.11784, 2023.

\bibitem{khan2023rate}
W.~U. Khan \emph{et~al.}, ``Rate splitting multiple access for next generation
  cognitive radio enabled {LEO} satellite networks,'' \emph{IEEE Trans.
  Wireless Commun.}, vol.~22, no.~11, pp. 8423--8435, 2023.

\bibitem{xu2023distributed}
Y.~Xu, L.~Yin, Y.~Mao, W.~Shin, and B.~Clerckx, ``Distributed rate-splitting
  multiple access for multilayer satellite communications,'' \emph{IEEE Trans.
  Commun.}, vol.~72, no.~10, pp. 6131--6144, 2024.

\bibitem{lin2021supporting}
Z.~Lin, M.~Lin, T.~De~Cola, J.-B. Wang, W.-P. Zhu, and J.~Cheng, ``Supporting
  {IoT} with rate-splitting multiple access in satellite and aerial-integrated
  networks,'' \emph{IEEE Internet Things J.}, vol.~8, no.~14, pp.
  11\,123--11\,134, 2021.

\bibitem{lin2020secure}
Z.~Lin, M.~Lin, B.~Champagne, W.-P. Zhu, and N.~Al-Dhahir, ``Secure and energy
  efficient transmission for {RSMA}-based cognitive satellite-terrestrial
  networks,'' \emph{IEEE Wireless Commun. Lett.}, vol.~10, no.~2, pp. 251--255,
  2020.

\bibitem{yin2022rate}
L.~Yin and B.~Clerckx, ``Rate-splitting multiple access for
  satellite-terrestrial integrated networks: {B}enefits of coordination and
  cooperation,'' \emph{IEEE Trans. Wireless Commun.}, vol.~22, no.~1, pp.
  317--332, 2023.

\bibitem{10266774}
J.~Lee, J.~Lee, L.~Yin, W.~Shin, and B.~Clerckx, ``Coordinated rate-splitting
  multiple access for integrated satellite-terrestrial networks with
  super-common message,'' \emph{IEEE Trans. Veh. Technol.}, vol.~73, no.~2, pp.
  2989--2994, 2024.

\bibitem{kim2023distributed}
D.~Kim, W.~Shin, J.~Park, and D.~K. Kim, ``Distributed precoding for
  satellite-terrestrial integrated networks without sharing csit: A
  rate-splitting approach,'' arXiv preprint arXiv:2309.06325, 2023.

\bibitem{alberti2010system}
X.~Alberti \emph{et~al.}, ``System capacity optimization in time and frequency
  for multibeam multi-media satellite systems,'' in \emph{Proc. 5th Adv.
  Satell. Multimedia Syst. Conf., 11th Signal Process. Space Commun. Workshops
  (ASMS/SPSC Workshops)}, 2010, pp. 226--233.

\bibitem{abdu2021flexible}
T.~S. Abdu, S.~Kisseleff, E.~Lagunas, and S.~Chatzinotas, ``Flexible resource
  optimization for {GEO} multibeam satellite communication system,'' \emph{IEEE
  Trans. Wireless Commun.}, vol.~20, no.~12, pp. 7888--7902, 2021.

\bibitem{ha2022geo}
V.~N. Ha, T.~T. Nguyen, E.~Lagunas, J.~C.~M. Duncan, and S.~Chatzinotas,
  ``{GEO} payload power minimization: Joint precoding and beam hopping
  design,'' in \emph{Proc. IEEE Glob. Commun. Conf. (GLOBECOM)}, 2022, pp.
  6445--6450.

\bibitem{wang2020noma}
A.~Wang, L.~Lei, E.~Lagunas, A.~I. P{\'e}rez-Neira, S.~Chatzinotas, and
  B.~Ottersten, ``{NOMA}-enabled multi-beam satellite systems: Joint
  optimization to overcome offered-requested data mismatches,'' \emph{IEEE
  Trans. Veh. Technol.}, vol.~70, no.~1, pp. 900--913, 2020.

\bibitem{lin2022multi}
Z.~Lin, Z.~Ni, L.~Kuang, C.~Jiang, and Z.~Huang, ``Multi-satellite beam hopping
  based on load balancing and interference avoidance for {NGSO} satellite
  communication systems,'' \emph{IEEE Trans. Commun.}, vol.~71, no.~1, pp.
  282--295, 2022.

\bibitem{zheng2012generic}
G.~Zheng, S.~Chatzinotas, and B.~Ottersten, ``Generic optimization of linear
  precoding in multibeam satellite systems,'' \emph{IEEE Trans. Wireless
  Commun.}, vol.~11, no.~6, pp. 2308--2320, 2012.

\bibitem{liu2019qos}
X.~Liu \emph{et~al.}, ``{Q}o{S}-guarantee resource allocation for multibeam
  satellite industrial internet of things with {NOMA},'' \emph{IEEE Trans. Ind.
  Informat.}, vol.~17, no.~3, pp. 2052--2061, 2019.

\bibitem{9769901}
A.~Wang, L.~Lei, E.~Lagunas, A.~I. Pérez-Neira, S.~Chatzinotas, and
  B.~Ottersten, ``Joint optimization of beam-hopping design and {NOMA}-assisted
  transmission for flexible satellite systems,'' \emph{IEEE Trans. Wireless
  Commun.}, vol.~21, no.~10, pp. 8846--8858, 2022.

\bibitem{cui2023energy}
H.~Cui, L.~Zhu, Z.~Xiao, B.~Clerckx, and R.~Zhang, ``Energy-efficient {RSMA}
  for multigroup multicast and multibeam satellite communications,'' \emph{IEEE
  Wireless Commun. Lett.}, vol.~12, no.~5, pp. 838--842, 2023.

\bibitem{10304489}
J.~Lee, J.~Seong, J.~Lee, D.~Le~Ruyet, L.~Yin, B.~Clerckx, and W.~Shin,
  ``{RSMA}-enabled multigroup multicast rate-matching for multibeam satellite
  systems,'' in \emph{Proc. 2023 IEEE 24th Int. Workshop Signal Process. Adv.
  Wireless Commun. (SPAWC)}, 2023, pp. 251--255.

\bibitem{3gpp_ntn}
{3GPP TR 38.821 v16.2.0}, ``{Solutions for NR to support non-terrestrial
  networks (NTN) (Release 16)},'' Apr. 2023.

\bibitem{7091022}
D.~Christopoulos, S.~Chatzinotas, and B.~Ottersten, ``Multicast multigroup
  precoding and user scheduling for frame-based satellite communications,''
  \emph{IEEE Trans. Wireless Commun.}, vol.~14, no.~9, pp. 4695--4707, 2015.

\bibitem{an2021rate}
J.~An, O.~Dizdar, B.~Clerckx, and W.~Shin, ``Rate-splitting multiple access for
  multi-antenna broadcast channel with imperfect {CSIT} and {CSIR},'' arXiv
  preprint arXiv:2102.08738, 2021.

\bibitem{lee2022max}
B.~Lee and W.~Shin, ``Max-min fairness precoder design for rate-splitting
  multiple access: Impact of imperfect channel knowledge,'' \emph{IEEE Trans.
  Veh. Technol.}, vol.~72, no.~1, pp. 1355--1359, 2022.

\bibitem{9894281}
J.~Park, J.~Choi, N.~Lee, W.~Shin, and H.~V. Poor, ``Rate-splitting multiple
  access for downlink {MIMO}: A generalized power iteration approach,''
  \emph{IEEE Trans. Wireless Commun.}, vol.~22, no.~3, pp. 1588--1603, 2023.

\bibitem{6816003}
Q.~Zhang \emph{et~al.}, ``Power scaling of uplink massive {MIMO} systems with
  arbitrary-rank channel means,'' \emph{IEEE J. Sel. Top. Signal Process.},
  vol.~8, no.~5, pp. 966--981, 2014.

\bibitem{grant2014cvx}
M.~Grant and S.~Boyd, ``{CVX}: Matlab software for disciplined convex
  programming, version 2.1,'' 2014.

\bibitem{ye2011interior}
Y.~Ye, \emph{Interior point algorithms: theory and analysis}.\hskip 1em plus
  0.5em minus 0.4em\relax John Wiley \& Sons, 2011.

\bibitem{marks1978general}
B.~R. Marks and G.~P. Wright, ``A general inner approximation algorithm for
  nonconvex mathematical programs,'' \emph{Operations research}, vol.~26,
  no.~4, pp. 681--683, 1978.

\bibitem{boyd2004convex}
S.~Boyd and L.~Vandenberghe, \emph{Convex {O}ptimization}.\hskip 1em plus 0.5em
  minus 0.4em\relax Cambridge University Press, 2004.

\bibitem{tang2021computation}
Q.~Tang, Z.~Fei, B.~Li, and Z.~Han, ``Computation offloading in {LEO} satellite
  networks with hybrid cloud and edge computing,'' \emph{IEEE Internet Things
  J.}, vol.~8, no.~11, pp. 9164--9176, 2021.

\bibitem{yoo2024cache}
S.~Yoo, S.~Jeong, J.~Kim, and J.~Kang, ``Cache-assisted mobile edge computing
  over space-air-ground integrated networks for extended reality
  applications,'' \emph{IEEE Internet Things J.}, 2024.

\end{thebibliography}
\end{document}